\documentclass[onecolumn,draftclsnofoot]{IEEEtran}
\usepackage{amsmath,amsfonts,amssymb,amsthm}
\usepackage{algorithmic}
\usepackage{algorithm}
\usepackage{array}
\usepackage[caption=false,font=normalsize,labelfont=sf,textfont=sf]{subfig}
\usepackage{textcomp}
\usepackage{stfloats}
\usepackage{url}
\usepackage{verbatim}
\usepackage{graphicx}
\usepackage{cite}
\usepackage{color}
\usepackage{threeparttable}
\usepackage{makecell}
\usepackage{multirow}
\newtheorem{theorem}{Theorem}
\newtheorem{lemma}{Lemma}
\newtheorem{remark}{Remark}
\newtheorem{proposition}{Proposition}

\begin{document}

\title{Graph-Aware Group Testing with Locally Clustered Infections}

\author{Jianing Li, Li Chai,~\IEEEmembership{Senior Member,~IEEE}, Hailin Zhang
\thanks{This work was supported by National Science Foundation of China under grant 62550085.}
\thanks{Jianing Li, Li Chai, Hailin Zhang are with the College of Control Science and Engineering, Zhejiang University, Hangzhou 310027, China (Email:lijn202409@zju.edu.cn; chaili@zju.edu.cn; hl\_zhang@zju.edu.cn).}
}
%


\maketitle

\begin{abstract}
Group testing has been widely used to identify infected individuals with a limited number of tests, typically under the assumption of independent infections. Recent studies have exploited correlations among individuals, but often require additional information beyond the contact graph, such as community structures, interaction strengths, or detailed infection dynamics. Such information may be unavailable, incomplete or unreliable in practice. In this work, we assume that only the contact graph is known and develop a graph-aware group testing framework that exploits localized infection clustering in pooling design, fundamental limits of recovery, and decoding. Specifically, we propose an optimal transport-based pooling design that incorporates graph proximity and pooling constraints into a unified optimization framework. We prove that, under mild conditions, the proposed design eliminates uninfected individuals with higher probability than the Bernoulli pooling design, reducing the feasible search space for decoding. Then, we characterize the family of possible infected sets induced by localized infection clustering and derive necessary conditions on the number of tests required for exact recovery, revealing a lower testing requirement than that under the combinatorial prior. For decoding, we model the infection states of the population as a piecewise-constant graph signal and propose a graph total variation regularized decoder. We establish sufficient conditions for exact recovery under the Bernoulli pooling design in both noiseless and noisy settings, and show that $\mathcal{O}\left(K\log\left(n/K\right)\right)$ tests are sufficient under mild conditions in the noiseless case. Extensive simulations on synthetic and real-world networks demonstrate the effectiveness of the proposed framework and the benefit of exploiting graph-induced correlations in group testing.
\end{abstract}

\begin{IEEEkeywords}
Non-adaptive group testing, localized infection clustering, optimal transport, sparse recovery, graph total variation.
\end{IEEEkeywords}

\section{Introduction}
Recurrent outbreaks of respiratory diseases over the past two decades, such as SARS, MERS, and COVID-19, have highlighted the critical role of large-scale screening in epidemic control. Early identification of infected individuals enables timely isolation, which is one of the most effective measures to curb the disease spread. However, screening the entire population through individual testing is impractical due to limited testing capacity, reagent shortages, and high financial costs~\cite{cheng2020diagnostic, cleary2021using}. These challenges are especially severe in the early stage of an outbreak, when the disease prevalence is very low and only a few individuals are infected. In such scenarios, it is crucial to develop testing strategies that can reliably identify all infections using as few tests as possible.

Group testing, originally proposed by Dorfman in 1943~\cite{dorfman1943detection}, is an effective approach to improve testing efficiency of large-scale screening. It combines samples from multiple individuals into pools and tests each pool. In Dorfman-style (adaptive) group testing~\cite{dorfman1943detection, hwang1972method,westreich2008optimizing, bilder2020tests,da2022simulation}, a negative test outcome declares all samples in the pool negative, whereas samples in positive pools must be tested in subsequent rounds. This strategy enables the screening of a large population with substantially fewer tests than individual testing, however, it requires multiple rounds of testing, resulting in longer turnaround time and increased operational complexity~\cite{da2022simulation}. 

In non-adaptive group testing~\cite{chan2011non,malioutov2012boolean, aldridge2014group,chan2014non,scarlett2020noisy}, all pools are designed and tested in a single round. Samples are assigned to pools according to a pooling matrix $M\in\left\{0,1\right\}^{m\times n}$, where $m$ and $n$ denote the number of tests and individuals, respectively. Each entry $M_{i,j}=1$ indicates that sample $j$ is included in the pool $i$. Commonly used random pooling designs~\cite{9905631} include: (i) the Bernoulli design, where each sample is independently included in each pool with probability $p$, (ii) the constant column weight design, where each sample is assigned to a fixed number of pools selected uniformly at random, (iii) the doubly-regular design, which constrains both the pool size and the number of pools per sample. Random pooling designs are simple to construct and analyze, but they do not utilize the correlations among individuals. Deterministic pooling designs can be constructed from Kirkman triple systems~\cite{9416868,goenka2021contact}, and Reed-Solomon error correcting code~\cite{sciadvabc5961}. These designs exploit combinatorial or algebraic structures to provide deterministic recovery guarantees. However, constructing deterministic pooling matrices may incur high computational costs for large problem sizes~\cite{bui2019efficient}, and some constructions are restricted to specific matrix dimensions~\cite{9416868}. 

Given the pooling matrix and test outcomes, a decoding algorithm is used to identify infected individuals, which can be formulated as a sparse recovery problem~\cite{malioutov2012boolean}:
\begin{equation}
  \begin{gathered}
    \min \left\|x\right\|_0 \\
    \mathrm{s.t.} \ y_i = \bigvee_{j=1}^{n}{M_{i,j}x_j},\ i=1,\ldots,m, \\
  \end{gathered}
\end{equation}
where $\bigvee$ is the boolean OR operation, $y\in\left\{0,1\right\}^m$ denotes the test outcomes, and $x\in\left\{0,1\right\}^n$ denotes the infection state vector, with $x_i=1$ if individual $i$ is infected and $x_i=0$ otherwise. In traditional group testing, the infection state vector $x$ is typically assumed to follow either the combinatorial prior, where the number of infections $K$ is fixed ($K\ll n$) and the infected set is uniformly distributed over all $K$-subsets of $n$ individuals, or the i.i.d. prior, where each individual is independently infected with a given probability\cite{10002300}. However, these priors fail to capture dependencies among individuals. During infection transmission, disease spreads through interactions between individuals, resulting in correlated infection states. 

Accordingly, recent studies have exploited correlations among individuals to improve performance of group testing. Silva et al.~\cite{silva2021group} formulated the two-stage group testing as a graph partitioning problem and showed that grouping correlated infections into relatively small pools reduces the expected number of tests. Ahn et al.~\cite{ahn2023adaptive} studied adaptive group testing under the stochastic block infection model, where transmissions occur more frequently within communities than across communities. Nikolopoulos et al.~\cite{nikolopoulos2023community} modeled the population as a collection of possibly overlapping communities and the infection probability of an individual depends on the corresponding communities. Given this model, they derived lower bounds on the number of tests for zero-error identification. They further proposed community-aware (non)-adaptive test designs, together with a loopy belief propagation based decoder. Building upon~\cite{nikolopoulos2023community}, Goenka et al.~\cite{goenka2021contact} used side information obtained from contact tracing in non-adaptive group testing. 

Beyond community structures, several studies have exploited correlations induced by infection propagation. Arasli et al.~\cite{arasli2023group} modeled disease spread from a randomly selected patient zero over a random connected graph, such that vertices within each infection cluster share the same state. Their pooling and decoding are based on a candidate cluster formation, which may be different from the true cluster formation and introduce a trade-off between the number of tests and false classifications. Nikpey et al.~\cite{10716002} considered an edge-faulty random graph, where each edge is independently retained with a given probability and vertices in each connected component are assumed to have the same state. They established upper and lower bounds on the number of tests required for different types of graphs. Subsequent work~\cite{10619244} modeled arbitrary correlations among individuals using a hypergraph with a probability distribution over its edges, and developed a greedy adaptive algorithm to update the posterior distribution on hyperedges given the previous test outcomes.

In addition, another line of works considered settings where the space of possible infected sets is restricted. Gonen et al.~\cite{9834789} represented the candidate infected sets as hyperedges of a hypergraph and studied exact recovery under the adaptive and non-adaptive group testing. Later, Lau et al.~\cite{10002300} assumed that the infected set is uniform over a subset of all size-$K$ sets and characterized information-theoretic limits for approximate recovery. They also introduced an Ising model, which uses the contact graph and edge strengths to characterize pairwise dependencies, and developed the corresponding decoders. 

The above studies exploit correlations among individuals by incorporating prior information from the contact network or infection process and demonstrate their benefits for group testing. In particular, these correlations are characterized through additional information beyond the contact graph, such as community structures~\cite{ahn2023adaptive,nikolopoulos2023community}, interaction strengths~\cite{10002300}, or detailed infection dynamics~\cite{goenka2021contact,silva2021group}. However, such information may be unavailable or unreliable in practice. For example, some networks do not exhibit well-defined community structures, limiting the applicability of community-based methods.

In our work, we assume that only the contact graph is known. The main challenge is to characterize correlations among individuals only from the graph structure, without relying on any additional information. To this end, we exploit the localized clustering of infections (see Fig.~\ref{fig:infection cluster}), where infected individuals tend to form clusters on the contact graph and neighboring individuals are more likely to share the same infection state. This observation motivates the graph-aware pooling design. Random pooling designs are constructed independently of the contact graph and may distribute individuals from the same infection cluster across multiple pools, resulting in a large number of positive pools and making it difficult to eliminate uninfected individuals. Existing graph-based designs exploit the graph structure to improve performance, but they rely on community structures~\cite{nikolopoulos2023community} or are tailored to two-stage testing~\cite{silva2021group}. In contrast, we develop a non-adaptive graph-aware pooling design using only the contact graph. Optimal transport, originated from the classical transportation problem~\cite{monge1781memoire,kantorovich1942transfer,peyre2019computational}, seeks a transport plan that minimizes the total cost while satisfying prescribed marginal constraints. This cost-based assignment mechanism is well suited for graph-aware pooling, as the transport cost can encode graph proximity and the marginal constraints can enforce the pooling constraints. Specifically, we formulate the assignment of individuals to pools as an optimal transport problem, where the transport costs are defined as shortest-path distances between individuals and the marginal constraints prescribe the pool size and the number of tests per individual. This formulation incorporates graph proximity and pooling constraints, promoting nearby individuals with correlated states to be pooled together.

The clustering of infections also provides a structural prior for recovery. We represent the infection states of the population as a graph signal, which is sparse on the vertex domain and exhibits a piecewise-constant structure over the graph. Such a structure is naturally captured by the graph total variation, which measures the sparsity of the first-order differences of a graph signal and promotes piecewise-constant solutions.

The main contributions of our work are summarized as follows:
\begin{itemize}
\item{We develop a graph-aware pooling design based on optimal transport, which incorporates graph proximity and pooling constraints into a unified optimization framework. The proposed design favors assigning nearby individuals to common pools, enabling more effective elimination of uninfected individuals and reducing the feasible search space for decoding. We further derive conditions under which it eliminates uninfected individuals with higher probability than the Bernoulli pooling design.}
\item {We characterize the family of possible infected sets induced by localized infection clustering. Different from the combinatorial prior, which assumes that the infected set is uniformly distributed over all $K$-subsets of $n$ individuals, our model restricts the candidate infected sets to a subset of all size-$K$ sets through the underlying graph structure. Then, we derive necessary conditions on the number of tests required for exact recovery, which is lower than that under the combinatorial prior.}
\item{We model the infection states of the population as a piecewise-constant graph signal. This representation captures both the sparsity of infections and the dependencies between neighboring individuals. We propose a graph total variation regularized decoder and establish sufficient conditions for exact recovery in both noiseless and noisy settings. The results show that $\mathcal{O}\left(K\log \left(n/K\right)\right)$ tests are sufficient under mild conditions in the noiseless case, which is of the same order as the information-theoretic lower bound under the combinatorial prior.}
\item {Extensive simulations on synthetic and real-world networks evaluate the proposed pooling design and decoder, showing that our graph-aware framework achieves better recovery performance than the existing group testing methods.}
\end{itemize}
\noindent\textit{Notations:} The set of real number is denoted by $\mathbb{R}$. For a number $a\in\mathbb{R}$, $\left(a\right)_{+}=\max\left\{a,0\right\}$ denotes the positive part of $a$. Sets are denoted by calligraphic letters, e.g., $\mathcal{X}$, with $\left|\mathcal{X}\right|$ representing its cardinality. For a vector $x$, $\operatorname{supp}\left(x\right)$ denotes the set of indices corresponding to the non-zero elements of $x$, and 
$\left\lVert x \right\rVert_0$ represents the $\ell_0$-norm, i.e., the number of non-zero elements in $x$. $\mathbf{1}_{n}$ denotes the all-one vector in $\mathbb{R}^n$ and $\mathbb{I}\left(\cdot\right)$ is the indicator function, which equals to $1$ if the condition inside is satisfied and $0$ otherwise. 
\begin{figure}[!t]
\centering
\includegraphics[width=0.5\linewidth]{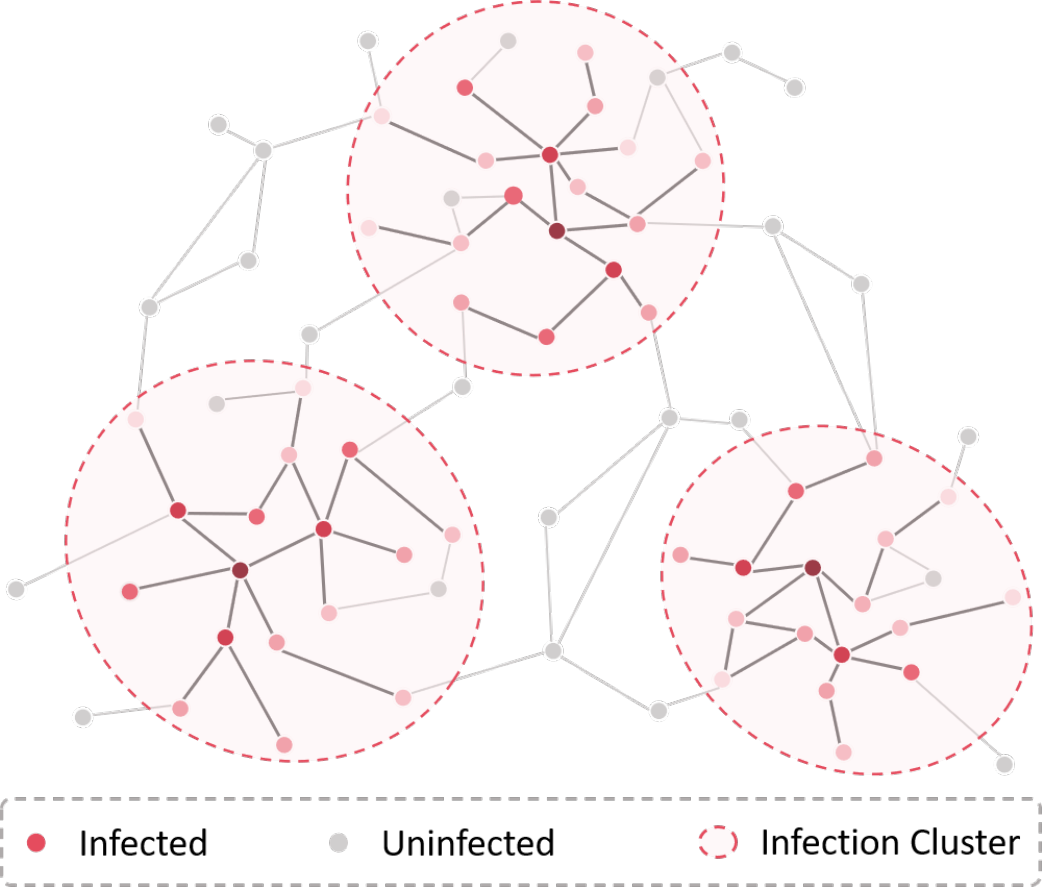}
\caption{Illustration of infection cluster. Infection sources propagate through the network and form spatially localized clusters, while uninfected individuals remain distributed outside these clusters.}
\label{fig:infection cluster}
\end{figure}

\section{Problem Formulation}\label{sec:problem formulation}
\subsection{Network and Infection Spread Model}
Let $\mathcal{G}=\left(\mathcal{V},\mathcal{E}, A\right)$ be a connected, undirected graph, where $\mathcal{V}=\{1,\ldots,n\}$ is the set of vertices and $\mathcal{E}$ is the set of edges. Each vertex represents an individual, and each edge $\left(i,j\right)\in\mathcal{E}$ indicates a contact between individuals $i$ and $j$ through which infection can be transmitted. The adjacency matrix of $\mathcal{G}$ is represented by $A\in \{0,1\}^{n\times n}$, where $A_{i,j}=1$ if $\left(i,j\right)\in\mathcal{E}$. The maximum degree of $\mathcal{G}$ is denoted by $d_{max}$, and the diameter of $\mathcal{G}$ is defined as
\begin{equation}
  D=\max_{j,j'\in\mathcal{V}}{d_{\mathcal{G}}\left(j,j'\right)},
\end{equation}
where $d_{\mathcal{G}}(j,j')$ denotes the shortest-path distance between vertices $j$ and $j'$.

We consider an infection process on the graph $\mathcal{G}$, following the susceptible-infected (SI) model~\cite{6413787}. Suppose that $k$ individuals are initially infected ($k\ll n$). The infection propagates along the edges of $\mathcal{G}$, and each infected individual transmits the infection to its susceptible neighbors independently with a certain probability. The propagation terminates when the number of infected individuals reaches $K$, yielding an infected set $\mathcal{I}\subseteq \mathcal{V}$ with $\left|\mathcal{I}\right|=K$, where $k\leq K \ll n$. Since the infection can only spread along edges of $\mathcal{G}$, the induced subgraph $\mathcal{G}\left[\mathcal{I}\right]$ has at most $k$ connected components, where $\mathcal{G}\left[\mathcal{I}\right]$ denotes the subgraph of $\mathcal{G}$ induced by the vertex set $\mathcal{I}$. Thus, the family of possible infected sets can be defined as
\begin{equation}
  \mathcal{F}_{K,k} = \left\{\mathcal{I}\subseteq \mathcal{V}: \left|\mathcal{I}\right|=K,c\left(\mathcal{G}\left[\mathcal{I}\right]\right)\leq k \right\},
\end{equation}
where $c\left(\mathcal{G}\left[\mathcal{I}\right]\right)$ denotes the number of connected components of the induced subgraph $\mathcal{G}\left[\mathcal{I}\right]$. Throughout this paper, we assume that the infected set $\mathcal{I}$ is uniformly distributed over $\mathcal{F}_{K,k}$.

We represent the infection states of the population as a graph signal $x=\left[x_1,\ldots,x_n\right]^T \in \{0,1\}^n$ defined on $\mathcal{G}$, where $x_i=1$ if individual $i$ is infected and $x_i=0$ otherwise. Thus, the infected set is given by $\mathcal{I}=\operatorname{supp}\left(x\right)$ and the vector $x$ is $K$-sparse. In addition, due to the localized spatial clustering of infections, the graph signal $x$ exhibits a piecewise-constant structure, with variations concentrated along the boundaries between infected and uninfected regions. 

\subsection{Group Testing Model}
In non-adaptive group testing, the pooling matrix is designed prior to testing. We use a binary matrix $M \in \left\{0,1\right\}^{m\times n}$ to represent the pooling process, where $m$ is the number of tests and $n$ is the number of individuals. Each row of $M$ corresponds to a pool, where $M_{i,j}=1$ indicates that the sample of individual $j$ is included in the pool $i$ and $M_{i,j}=0$ otherwise. In the noiseless setting, the test outcome vector $y=\left[y_1,\ldots,y_m\right]^T\in\left\{0,1\right\}^m$ is given by
\begin{equation}\label{eq:system_noiseless}
y_i = \bigvee_{j=1}^{n}{M_{i,j}x_j},\ i=1,\ldots,m,
\end{equation}
where $\bigvee$ is the boolean OR operation. The pool $i$ is positive if and only if it contains at least one infected individual. In the noisy setting, each test outcome is independently flipped with probability $\epsilon\in(0,1/2)$. The noisy test outcome vector $\tilde{y}=\left[\tilde{y}_1,\ldots,\tilde{y}_m\right]^T\in\left\{0,1\right\}^m$ is given by
\begin{equation}\label{eq:system_noisy}
\tilde{y}_i = y_i \oplus e_i,\ e_i\sim Bernoulli(\epsilon),\ i=1,\ldots,m,
\end{equation}
where $\oplus$ is the boolean XOR operation and $\left\{e_i\right\}_{i=1}^{m}$ are independent Bernoulli random variables.

\subsection{Problem Statement}
Given the contact graph and the infection spread model described above, our goal is two-fold: (i) design a non-adaptive pooling matrix $M$ that exploits the graph structure, and (ii) develop a decoder that recovers the infected set $\mathcal{I}$, or equivalently, the infection sate vector $x$, from the test outcomes $y$ (or $\tilde{y}$ in the noisy setting). Traditional group testing is typically studied under the combinatorial prior, where every $K$-subset of $\mathcal{V}$ is assumed to be equally likely, yielding $\binom{n}{K}$ possible infected sets. In our setting, however, infection propagates along the edges of $\mathcal{G}$ from $k$ sources, forming localized clusters and restricting the infected set to $\mathcal{F}_{K,k}$. This clustered structure induces correlations between the infection states of neighboring vertices, i.e., $x_j$ and $x_{j'}$ for $(j,j')\in\mathcal{E}$. We exploit these graph-induced correlations in both pooling design and decoding to achieve exact recovery with as few tests as possible.

\section{Optimal Transport-Based Pooling Design}\label{sec:optimal tansport-based pooling design}
In this section, we develop a graph-aware pooling design that assigns samples to pools according to their proximity on the contact graph $\mathcal{G}$. The key idea is to associate each pool with a representative vertex and assign samples closer to the representative a higher probability of being included in that pool. Accordingly, we formulate the assignment of samples to pools as an optimal transport problem, where the transport plan specifies the assignment probabilities and the transport cost is defined as the graph distance between each sample and the representative vertex of each pool. Minimizing the total transport cost therefore promotes the assignment of nearby samples to the same pools. The detailed construction is presented as follows.

We first select a set of representative vertices $\mathcal{C}=\left\{c_1,\ldots,c_m\right\} \subseteq \mathcal{V}$ via farthest-point sampling to ensure a uniform coverage of the graph, where $c_i$ is associated with the pool $i$. The first vertex $c_1$ is chosen as the graph center, i.e., 
\begin{equation}
  c_1 = \arg \min_{j\in \mathcal{V}}\max_{j'\in \mathcal{V}} d_{\mathcal{G}}(j,j').
\end{equation}
Given the set $\mathcal{C}_{l}=\left\{c_1,\ldots,c_l\right\}$, the next vertex is chosen as
\begin{equation}
  c_{l+1} = \arg \max_{j\in \mathcal{V}\backslash \mathcal{C}_l} \min_{j'\in\mathcal{C}_l} d_{\mathcal{G}}\left(j,j'\right),
\end{equation}
which ensures that each selected vertex is as far as possible from the previously selected vertices. 

Next, we define the cost matrix $C\in\mathbb{R}^{m\times n}$ with $C_{i,j}=d_{\mathcal{G}}\left(c_i,j\right)$ and formulate the transport plan as a probability matrix $P\in\left(0,1\right)^{m\times n}$, where $P_{i,j}$ denotes the probability that sample (or individual) $j$ is assigned to pool $i$. To ensure a fair comparison with the Bernoulli pooling design, we impose
\begin{equation}
  \begin{aligned}
    P\mathbf{1}_{n} = np\mathbf{1}_{m},\ P^T\mathbf{1}_{m} = mp\mathbf{1}_{n},
  \end{aligned}
\end{equation}
which prescribe the expected pool size and the expected number of pools per sample. Other constraints can also be incorporates as needed. The transport plan $P$ is obtained by solving the entropic regularized optimal transport problem:
\begin{equation}\label{eq:optimal transport}
  \begin{gathered}
    \min_{P\in\left(0,1\right)^{m\times n}} \langle C, P\rangle + \varepsilon \sum_{i,j}{P_{i,j}\left(\log\left(P_{i,j}\right)-1\right)}\\
    \mathrm{s.t.} \ P\mathbf{1}_{n} = np\mathbf{1}_{m}, \ P^T\mathbf{1}_{m} = mp\mathbf{1}_{n}, \\
  \end{gathered}
\end{equation}
where $\langle C, P\rangle$ represents the Frobenius inner product and $\varepsilon>0$. A smaller $\varepsilon$ encourages nearby individuals to be pooled together, while a larger $\varepsilon$ leads to a more random pooling design. Thus, the choice of $\varepsilon$ should be adapted to the structural characteristics of the graph.

To solve~(\ref{eq:optimal transport}), we introduce two dual variables $f=\left[f_1,\ldots,f_n\right]^T\in\mathbb{R}^n$ and $g=\left[g_1,\ldots,g_m\right]^T\in\mathbb{R}^m$ for each marginal constraint. The Lagrangian of~(\ref{eq:optimal transport}) is as follows:
\begin{equation}\label{eq:lagrange}
  \begin{aligned}
    L(P, f, g) = \langle C, P\rangle + \varepsilon \sum_{i,j}{P_{i,j}\left(\log\left(P_{i,j}\right)-1\right)} - \langle f, P^T\mathbf{1}_{m} - mp\mathbf{1}_{n}\rangle - \langle g, P\mathbf{1}_{n} - np\mathbf{1}_{m}\rangle.
  \end{aligned}
\end{equation}
First order conditions then yield
\begin{equation}
  \frac{\partial {L(P, f, g)}}{\partial {P_{i,j}}}=C_{i,j} + \varepsilon \log\left(P_{i,j}\right)-f_{j}-g_{i} = 0.
\end{equation}
Thus, the optimal solution $P^{\ast}$ is given by
\begin{equation}
  P_{i,j}^{\ast} = \exp \left(\frac{f_j+g_i-C_{i,j}}{\varepsilon}\right).
\end{equation}
Define $u=\left[u_1,\ldots,u_m\right]^T$ and $v=\left[v_1,\ldots,v_n\right]^T$, where 
\begin{equation}
  \begin{aligned}
    & u_i=\exp\left(\frac{g_i}{\varepsilon}\right),\ i=1,\ldots,m, \\
    & v_j=\exp\left(\frac{f_j}{\varepsilon}\right),\ j=1,\ldots,n.
  \end{aligned}
\end{equation}
The optimal solution $P^{\ast}$ can be rewritten as
\begin{equation}\label{eq:P_form}
  P^{\ast}_{i,j} = u_i v_j\exp\left(-\frac{C_{i,j}}{\varepsilon}\right).
\end{equation}
Then, the variables $u$ and $v$ must satisfy the following equations:
\begin{equation}\label{eq:variables_constraints}
  \begin{gathered}
    u_i \sum_{j=1}^{n}{v_j\exp\left(-\frac{C_{i,j}}{\varepsilon}\right)} = np, \\
    v_j \sum_{i=1}^{m}{u_i\exp\left(-\frac{C_{i,j}}{\varepsilon}\right)} = mp,
  \end{gathered}
\end{equation}
which can be solved via the Sinkhorn's algorithm~\cite{peyre2019computational}.

The pooling matrix $M\in\left\{0,1\right\}^{m\times n}$ is generated with $M_{i,j}\sim Bernoulli\left(P_{i,j}^{\ast}\right)$, where $P_{i,j}^{\ast}$ is the probability that sample $j$ is assigned to pool $i$. Since $P_{i,j}^{\ast}$ decreases as $C_{i,j}$, samples closer to a representative vertex have higher probabilities of being included in the corresponding pool.

The proposed graph-aware pooling design offers performance advantages over graph-agnostic random designs. In random pooling designs, samples are assigned to pools independently, which may distribute individuals from the same infection cluster across multiple pools. This dispersion increases the number of positive pools and makes it difficult to eliminate negative samples. In contrast, the proposed design favors assigning nearby samples to common pools, thereby concentrating each infection cluster within fewer pools. This results in a larger number of negative pools, enabling more effective elimination of negative samples. The theoretical analysis of the performance advantage is provided in Section~\ref{sec:decoding with graph total variation}.

\section{Information-Theoretic Lower Bound for Exact Recovery}\label{sec:information-theoretic lower bound for exact recovery}
In this section, we derive necessary conditions on the number of tests required for exact recovery under the graph-based prior. Here, the graph-based prior captures the localized clustering of infections over the graph, thereby restricting the candidate infected sets to $\mathcal{F}_{K,k}$.

We first establish upper and lower bounds on the cardinality of $\mathcal{F}_{K,k}$.

\begin{proposition}\label{proposition}
  Let $\mathcal{G}=\left(\mathcal{V},\mathcal{E}\right)$ be a connected, undirected graph with $n$ vertices and maximum degree $d_{max}$. Suppose that $k\leq \min\left\{\kappa, K\right\}$, where 
  \begin{equation}
    \begin{aligned}
      \kappa = \frac{\sqrt{\left(n+K+ed_{max}\right)^2+4nK\left(ed_{max}-1\right)}}{2\left(ed_{max}-1\right)} 
      - \frac{n+K+2-ed_{max}}{2\left(ed_{max}-1\right)}.
    \end{aligned}
  \end{equation}
  Then, we have 
  \begin{equation}
    \left|\mathcal{F}_{K,k}\right|\leq k\binom{n}{k}\binom{K-1}{k-1}(ed_{max})^{K-k} < \binom{n}{K},
  \end{equation}
  where the second inequality holds provided that
  \begin{equation}\label{eq:n_condition_necessary}
    n > K-1 + ed_{max}K\left(kK^{k-1}\right)^{\frac{1}{K-k}}.
  \end{equation}
  Moreover, 
  \begin{equation}
    |\mathcal{F}_{K,k}|\geq\max\left\{n-K+1, \left(\frac{n}{K}\right)^k\right\}.
  \end{equation}
\end{proposition}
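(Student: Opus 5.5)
We prove the upper bound by counting connected subgraphs (lattice animals) and the lower bound by explicit constructions.

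\emph{Upper bound.} We use the standard fact that the number of connected vertex subsets of size $s$ containing a fixed vertex $v$ is at most $(e d_{max})^{s-1}$. One way to see this is to encode each such set by a spanning tree explored from $v$. The tree is specified by a subset of size $s-1$ of the at most $d_{max}s$ ``edge slots'' of a BFS/DFS exploration, giving at most $\binom{d_{max}s}{s-1}\le (ed_{max})^{s-1}$ choices up to a harmless constant absorbed in the bound. Any $\mathcal{I}\in\mathcal{F}_{K,k}$ has $k'\le k$ components $S_1,\dots,S_{k'}$ with sizes $s_1+\dots+s_{k'}=K$. We first pad to exactly $k$ ``rooted pieces'' by splitting components; alternatively we sum over $k'\le k$ and bound each term. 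For fixed $k'$, we choose a root in each component ($\le\binom{n}{k'}$ ways), choose a composition of $K$ into $k'$ positive parts ($\binom{K-1}{k'-1}$ ways), and choose each component ($\prod_t (ed_{max})^{s_t-1}=(ed_{max})^{K-k'}$ ways). Summing over $k'\le k$ and using that $\binom{n}{k'}\binom{K-1}{k'-1}(ed_{max})^{K-k'}$ is nondecreasing in $k'$ up to $k$, we bound the sum by $k$ times the $k'=k$ term. The monotonicity condition is where $\kappa$ enters. The ratio of consecutive terms is
\[
\frac{(n-k')(K-k')}{(k'+1)\,k'\,ed_{max}} ,
\]
and requiring it to be at least $1$ is a quadratic inequality in $k'$. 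We check that its positive root is $\kappa$. This is the algebraic step to get right, and it explains the hypothesis $k\le\min\{\kappa,K\}$.

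\emph{Comparison with $\binom{n}{K}$.} We bound $k\binom{n}{k}\binom{K-1}{k-1}(ed_{max})^{K-k}\le k\,n^k K^{k-1}(ed_{max})^{K-k}/(k!(k-1)!)$, and then crudely drop the factorials to get $k n^kK^{k-1}(ed_{max})^{K-k}$. We also use $\binom{n}{K}\ge \big((n-K+1)/K\big)^K$. It then suffices that
\[
\Big(\frac{n-K+1}{K}\Big)^{K-k}\Big(\frac{n-K+1}{K}\Big)^{k}> kK^{k-1}(ed_{max})^{K-k} n^k/\,\ldots
\]
The exponents need care. The cleanest route is to compare $\binom{n}{K}/\binom{n}{k}\ge \big((n-K+1)/K\big)^{K-k}$ directly. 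This holds since $\binom{n}{K}/\binom{n}{k}=\prod_{t=k+1}^{K}\frac{n-t+1}{t}$, and each factor is at least $\frac{n-K+1}{K}$. It then suffices that $\big((n-K+1)/K\big)^{K-k}>kK^{k-1}(ed_{max})^{K-k}$ (using $\binom{K-1}{k-1}\le K^{k-1}$). Taking $(K-k)$-th roots gives exactly \eqref{eq:n_condition_necessary}.

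\emph{Lower bounds.} For $n-K+1$, we take a spanning tree of $\mathcal{G}$ and a DFS/BFS ordering $v_1,\dots,v_n$ in which every prefix is connected. We note that the removal order of leaves gives a sequence of distinct connected $K$-sets. Concretely, we repeatedly delete a leaf of the tree that is not in the current window and produce $n-K+1$ distinct connected $K$-sets, each with one component, so each lies in $\mathcal{F}_{K,k}$ since $k\ge1$. For $(n/K)^k$, we partition the vertex set (via the spanning tree) into roughly $n/K$ vertex-disjoint connected sets. We then choose $k$ of them and an appropriate connected union or sub-piece totaling $K$ vertices with at most $k$ components, giving at least $\binom{\lfloor n/K\rfloor}{k}$ or, more carefully, $(n/K)^k$ sets. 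The main obstacle is this last construction: getting exactly $(n/K)^k$ rather than a binomial. To do so, we count $k$-tuples of anchor vertices chosen from $k$ disjoint blocks of size $n/k$, each grown into a connected piece of size $K/k$ inside its block, and argue injectivity.
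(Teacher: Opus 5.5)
\textbf{Gap: the lower bound $|\mathcal{F}_{K,k}|\ge (n/K)^k$.} Your block construction needs every anchor in each block to be growable, inside its block, into a connected piece of size $K/k$ (or, in your fallback, a partition into about $n/K$ large connected pieces). A connected graph need not allow this. In a star $K_{1,n-1}$ with $k\ge 2$, some block avoids the centre and therefore induces only isolated leaves, so no anchor there can be grown to a connected piece of size $K/k\ge 2$. There are further problems. The quantities $n/k$ and $K/k$ need not be integers. The map from anchor tuples to grown sets is in general many-to-one, so ``argue injectivity'' fails; you would need a multiplicity bound instead. The fallback count $\binom{\lfloor n/K\rfloor}{k}$ is smaller than $(n/K)^k$ by roughly a factor $k!$, so it does not give the claim either.

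The missing idea is a double count with no blocks at all, which is what the paper does. Take any $k$-subset $\mathcal{S}\subseteq\mathcal{V}$ and repeatedly add a vertex adjacent to the current set. Such a vertex exists while the size is below $K\le n$, because $\mathcal{G}$ is connected. Adding it never increases the number of components, so you end at some $\mathcal{I}\supseteq\mathcal{S}$ in $\mathcal{F}_{K,k}$. Each member of $\mathcal{F}_{K,k}$ contains only $\binom{K}{k}$ sets of size $k$. Hence $|\mathcal{F}_{K,k}|\binom{K}{k}\ge\binom{n}{k}$, and so $|\mathcal{F}_{K,k}|\ge\prod_{t=0}^{k-1}\frac{n-t}{K-t}\ge (n/K)^k$, using $\frac{n-t}{K-t}\ge\frac{n}{K}$ for $n\ge K$. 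This is your divide-by-multiplicity instinct applied to unconstrained anchors.

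\textbf{Upper bound and comparison.} These parts are sound and follow the paper. Counting sets with $k'$ components by roots, a composition of $K$, and the rooted connected-set bound gives the paper's $T_{k'}=\binom{n}{k'}\binom{K-1}{k'-1}(ed_{max})^{K-k'}$. Your ratio $\frac{(n-k')(K-k')}{k'(k'+1)ed_{max}}$ is correct. Since it decreases in $k'$, requiring it to be at least $1$ for all $k'\le k-1$ is equivalent to $(n-k+1)(K-k+1)\ge ed_{max}k(k-1)$, whose positive root in $k$ is $\kappa$. In the variable $k'$ the root is $\kappa-1$, so your claim that the root is $\kappa$ is off only by that index shift. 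Your comparison with $\binom{n}{K}$ uses $\binom{n}{K}/\binom{n}{k}=\prod_{t=k+1}^{K}\frac{n-t+1}{t}\ge\left(\frac{n-K+1}{K}\right)^{K-k}$ and $\binom{K-1}{k-1}\le K^{k-1}$. It is correct and spells out a step the paper only asserts.

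\textbf{The $n-K+1$ bound.} The paper cites Dossou-Olory et al. Your leaf-deletion sketch points the right way, but as written (deleting leaves outside a fixed window) it produces no new sets. State it as an induction on $n$:
\begin{itemize}
\item delete a leaf $\ell$ of a spanning tree $T$;
\item keep the at least $n-K$ connected $K$-sets of $T-\ell$;
\item add one connected $K$-set of $T$ containing $\ell$.
\end{itemize}
All of these are connected in $\mathcal{G}$, hence lie in $\mathcal{F}_{K,k}$.
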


\begin{proof}
  We first derive the upper bound on $\left|\mathcal{F}_{K,k}\right|$. For any $\mathcal{I}\in\mathcal{F}_{K,k}$, the induced subgraph $\mathcal{G}\left[\mathcal{I}\right]$ has at most $k$ connected components. Suppose that $\mathcal{G}\left[\mathcal{I}\right]$ has exactly $r$ connected components ($1\leq r\leq k$), with component sizes $s_1,\ldots,s_r$ satisfying $s_i \geq 1$ and $\sum_{i=1}^r s_i =K$. For a fixed $r$, there are at most $\binom{n}{r}$ ways to select one root for each component and $\binom{K-1}{r-1}$ ways to partition $K$ vertices into $r$ components. By~\cite[Lemma 2.3]{patel2019computing}, the number of connected induced subgraphs of size $s_i$ containing a prescribed root is at most $\left(ed_{max}\right)^{s_i-1} /2$. Thus, the number of infected sets satisfying $c\left(\mathcal{G}\left[\mathcal{I}\right]\right)=r$ is upper bounded by  
  \begin{equation}
    T_r = \binom{n}{r}\binom{K-1}{r-1}\left(ed_{max}\right)^{K-r}.
  \end{equation}
  Summing over all possible values of $r$, yields
  \begin{equation}\label{eq:F_lower_bound}
    \left|\mathcal{F}_{K,k}\right| \leq  \sum_{r=1}^{k}{T_r} = \sum_{r=1}^{k}{\binom{n}{r}\binom{K-1}{r-1}\left(ed_{max}\right)^{K-r}}.
  \end{equation}
  Under the assumption of $k\leq \kappa$, we have
  \begin{equation}
    (n-k+1)(K-k+1)\geq ed_{max}k(k-1),
  \end{equation}
  which implies that $T_r$ is monotonically increasing with respect to $r$. Hence,
  \begin{equation}
    \left|\mathcal{F}_{K,k}\right| \leq k\binom{n}{k}\binom{K-1}{k-1}(ed_{max})^{K-k},
  \end{equation}
  which is strictly smaller than $\binom{n}{K}$ provided that
  \begin{equation}
    n > K-1 + ed_{max}K\left(kK^{k-1}\right)^{\frac{1}{K-k}}.
  \end{equation}
  
  We next derive the lower bound on $\left|\mathcal{F}_{K,k}\right|$. Since every connected graph of size $n$ has at least $n-K+1$ connected induced subgraphs of size $K$~\cite[Theorem 4]{dossou2018graphs}, it follows that
  \begin{equation}
    \left|\mathcal{F}_{K,k}\right|\geq n-K+1.
  \end{equation}
  In addition, due to the connectivity of the graph, there are $\binom{n}{k}$ possible source sets, each of which produces at least one infected set in $\mathcal{F}_{K,k}$, while each infected set contains at most $\binom{K}{k}$ source sets. Hence,
  \begin{equation}
    \left|\mathcal{F}_{K,k}\right| \binom{K}{k} \geq \binom{n}{k},
  \end{equation} 
  which yields 
  \begin{equation}
    \left|\mathcal{F}_{K,k}\right| \geq \left(\frac{n}{K}\right)^k.
  \end{equation}
  This completes the proof.
\end{proof}

\begin{remark}
  Since the number of infections $K$ is usually larger than the number of sources $k$, i.e., $K\gg k$, the term $\left(kK^{k-1}\right)^{\frac{1}{K-k}}$ is close to one. Hence, the condition in~(\ref{eq:n_condition_necessary}) can be approximated by
  \begin{equation}
    n > (1+ed_{max})K,
  \end{equation}
  which is mild in the regime $k < K \ll n$.
\end{remark}

We now establish necessary conditions for exact recovery. Define the error probability as $\mathbb{P}\left(\hat{\mathcal{I}}\neq \mathcal{I}\right)$, where $\hat{\mathcal{I}}$ is an estimate of $\mathcal{I}$.

\begin{theorem}[Necessary Conditions for Exact Recovery]\label{thm:necessary conditions}
In the noiseless setting, any non-adaptive group testing algorithm that has an error probability of at most $\delta$ requires
\begin{equation}
  m \geq \left(1-\delta\right)\log_2 {\left|\mathcal{F}_{K,k}\right|}-1.
\end{equation}
Moreover, in the noisy setting with flip probability $\epsilon\in\left(0,1/2\right)$, it requires
\begin{equation}
  m \geq \frac{\left(1-\delta\right)\log_2 {\left|\mathcal{F}_{K,k}\right|}-1}{1-H_b(\epsilon)},
\end{equation}
where $H_b(\cdot)$ denotes the binary entropy function.
\end{theorem}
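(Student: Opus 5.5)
We prove the bound with Fano's inequality applied to the Markov chain $\mathcal{I}\to y\to\hat{\mathcal{I}}$ (respectively $\mathcal{I}\to y\to\tilde y\to\hat{\mathcal{I}}$ in the noisy case). Here $\mathcal{I}$ is uniform on $\mathcal{F}_{K,k}$ by assumption. We first treat a deterministic pooling matrix $M$; a randomized design is handled by conditioning on $M$, since $M$ is independent of $\mathcal{I}$ and every step below holds conditionally on $M$. The decoder may also be randomized, but the data processing inequality covers this.

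\textbf{Noiseless case.} Since $\mathcal{I}$ is uniform on $\mathcal{F}_{K,k}$, we have $H(\mathcal{I})=\log_2|\mathcal{F}_{K,k}|$. Fano's inequality, in the form $H(\mathcal{I}\mid\hat{\mathcal{I}})\le 1+\mathbb{P}(\hat{\mathcal{I}}\neq\mathcal{I})\log_2|\mathcal{F}_{K,k}|$, gives
\begin{equation*}
\log_2|\mathcal{F}_{K,k}| = H(\mathcal{I}) \le I(\mathcal{I};\hat{\mathcal{I}}) + 1 + \delta\log_2|\mathcal{F}_{K,k}|.
\end{equation*}
By data processing, $I(\mathcal{I};\hat{\mathcal{I}})\le I(\mathcal{I};y)\le H(y)$. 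Since $y\in\{0,1\}^m$, we have $H(y)\le m$. Rearranging gives $m\ge(1-\delta)\log_2|\mathcal{F}_{K,k}|-1$.

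\textbf{Noisy case.} The same chain yields $(1-\delta)\log_2|\mathcal{F}_{K,k}|-1\le I(\mathcal{I};\tilde y)$, so it suffices to show $I(\mathcal{I};\tilde y)\le m(1-H_b(\epsilon))$. Because $y$ is a deterministic function of $\mathcal{I}$ given $M$, we have $I(\mathcal{I};\tilde y)\le I(y;\tilde y)$. The channel from $y$ to $\tilde y$ is $m$ independent uses of a binary symmetric channel with crossover $\epsilon$, so $H(\tilde y\mid y)=\sum_i H(e_i)=mH_b(\epsilon)$. Combined with $H(\tilde y)\le\sum_i H(\tilde y_i)\le m$, this gives $I(y;\tilde y)=H(\tilde y)-H(\tilde y\mid y)\le m-mH_b(\epsilon)$. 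Dividing by $1-H_b(\epsilon)>0$ gives the stated bound.

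\textbf{Main care point.} The only delicate step is the noisy single-letterization. For non-adaptive tests it is immediate, because the noise $e_i$ is independent across tests and independent of $y$, so $H(\tilde y\mid y)$ factorizes exactly as used above. For a random $M$, every entropy is conditioned on $M$ and the result is then averaged over $M$, which preserves the inequalities.
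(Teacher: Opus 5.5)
Your proof is correct and follows the same route as the paper, which defers to the Fano's-inequality argument of Chan et al.\ (uniform prior giving $H(\mathcal{I})=\log_2|\mathcal{F}_{K,k}|$, data processing, $H(y)\le m$, and the binary symmetric channel bound $I(y;\tilde y)\le m(1-H_b(\epsilon))$). One wording fix: the step $I(\mathcal{I};\tilde y)\le I(y;\tilde y)$ follows from the Markov chain $\mathcal{I}\to y\to\tilde y$, which holds because the noise is independent of $\mathcal{I}$; it does not follow from $y$ being a function of $\mathcal{I}$, since that fact only gives the reverse inequality.
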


\begin{proof}
  The proof is analogous to~\cite[Theorems 1 and 2]{chan2011non}.
\end{proof}

\begin{remark}
  Under the combinatorial prior, the number of infected sets is $\binom{n}{K}$ and the information-theoretic lower bound scales as $K\log_2\left(n/K\right)$. In contrast, under the graph-based prior, the infected set is restricted to $\mathcal{F}_{K,k}$, whose cardinality is smaller than $\binom{n}{K}$ by Proposition~\ref{proposition}. Consequently, the information-theoretic lower bound is determined by $\log_2 \left|\mathcal{F}_{K,k}\right|$, which scales approximately as  
  \begin{equation}
    k\log_2\left(\frac{nK}{k^2 d_{max}}\right) + K\log_2 (d_{max})
  \end{equation}
  by ignoring lower order terms.
\end{remark}

\section{Decoding with Graph Total Variation}\label{sec:decoding with graph total variation}
In this section, we develop a graph total variation regularized decoder to recover the infected set $\mathcal{I}$ (or the graph signal $x$) under the graph-based prior. Theorem~\ref{thm:necessary conditions}
characterizes necessary conditions on the number of tests for any method to achieve exact recovery with high probability, whereas here we derive sufficient conditions under which the proposed decoder achieves exact recovery in both noiseless and noisy settings. We further analyze the graph-aware pooling design and demonstrate its advantage for decoding.

\subsection{Graph Total Variation Regularized Decoder}
The infected set $\mathcal{I}$, i.e., the support of graph signal $x\in\left\{0,1\right\}^n$, is generated by multi-source propagation over the graph $\mathcal{G}$. Due to the localized clustering of infections, the graph signal $x$ exhibits a piecewise-constant structure. To capture this structural property, we introduce graph total variation, which is defined as
\begin{equation}\label{eq:graph_total_variation}
  \left\|Bx\right\|_1  = \sum_{\left(i,j\right)\in\mathcal{E}} {\left|x_i-x_j\right|},
\end{equation}
where $B$ is a $\left|\mathcal{E}\right|\times \left|\mathcal{V}\right|$ graph incidence matrix. Since the $\ell_1$-norm in (\ref{eq:graph_total_variation}) promotes sparsity of the first-order differences of a graph signal, graph total variation regularization encourages piecewise-constant graph signals~\cite{7117446,wang2016trend,8926407,chen2021graph}. 

To recover $x$, we introduce the graph total variation regularizer into the sparse recovery formulation, yielding the following optimization problem:
\begin{equation}\label{eq:recovery_problem_noiseless}
  \begin{gathered}
    \min_{z\in\left\{0,1\right\}^n} \gamma\left\|z\right\|_0 + \left\|Bz\right\|_1 \\
    \mathrm{s.t.} \ y_i = \bigvee_{j=1}^{n}{M_{i,j}z_j},\ i=1,\ldots,m, \\
  \end{gathered}
\end{equation}
where $M$ is the pooling matrix and $\gamma>0$. The first term promotes the sparsity of $x$, whereas the second term encourages piecewise-constant solutions over the graph. However, the constraints in~(\ref{eq:recovery_problem_noiseless}) are not linear, we replace them with linear formulations~\cite{malioutov2012boolean}. Define the sets of the positive and negative pools as $\mathcal{P}=\left\{i\mid y_i=1\right\}$ and $\mathcal{N}=\left\{i\mid y_i=0\right\}$, respectively. The problem in~(\ref{eq:recovery_problem_noiseless}) can be equivalently formulated as
\begin{equation}\label{eq:recovery_problem_noiseless_linear}
  \begin{gathered}
    \min_{z\in\left\{0,1\right\}^n} \gamma\left\|z\right\|_0 + \left\|Bz\right\|_1 \\
    \begin{aligned}
      \mathrm{s.t.} \ & M_{i,:}z=0,\ i\in\mathcal{N}, \\
      & M_{i,:}z\geq 1,\ i\in\mathcal{P},
    \end{aligned}
  \end{gathered}
\end{equation}
where $M_{i,:}$ represents the $i$-th row of $M$. It is obvious that any sample appearing in negative pools can be directly declared negative. The remaining samples form the candidate set, which is given by
\begin{equation}\label{eq:candidate_set}
  \mathcal{U}=\left\{j\in\mathcal{V}\mid M_{i,j}=0\ \text{for}\ \text{all}\ i\in\mathcal{N}\right\}.
\end{equation}

In the noisy setting, each test outcome is independently flipped with probability $\epsilon\in\left(0,1/2\right)$, and the observed test outcomes $\tilde{y}$ may differ from the true results. Accordingly, we declare a sample negative only if it participates in many observed negative pools. Let $\tilde{\mathcal{P}}=\left\{i\mid \tilde{y}_i=1\right\}$ and $\tilde{\mathcal{N}}=\left\{i\mid \tilde{y}_i=0\right\}$. For each sample $j\in\left\{1,\ldots,n\right\}$, define $\tilde{N}_j = \sum_{i\in \tilde{\mathcal{N}}} {M_{i,j}}$, which counts the number of observed negative pools containing sample $j$. Given a threshold $\tau$, sample $j$ with $\tilde{N}_j \geq \tau$ is declared negative and excluded from the candidate set. The resulting candidate set is given by
\begin{equation}
  \mathcal{U}_{\epsilon}=\left\{j\in\mathcal{V}\mid \tilde{N}_j <\tau \right\}.
\end{equation}

In addition, we use the Hamming distance to account for measurement inconsistencies. For any $z\in\left\{0,1\right\}^n$, let $\hat{y}\left(z\right)$ denote the corresponding noiseless measurement vector, which is given by
\begin{equation}
  \hat{y}_i\left(z\right) = \bigvee_{j=1}^{n}{M_{i,j}z_j},\ i=1,\ldots,m.
\end{equation}
The normalized Hamming distance between $\tilde{y}$ and  $\hat{y}\left(z\right)$ is defined as
\begin{equation}
  \begin{aligned}
    d_H\left(\tilde{y},\hat{y}\left(z\right)\right) = \frac{1}{m}\sum_{i=1}^{m} {\mathbb{I}\left(\tilde{y}_i \neq \hat{y}_i\left(z\right)\right)} 
    = \frac{1}{m}\left(\sum_{i\in\tilde{\mathcal{N}}}{\mathbb{I}\left(M_{i,:}z \geq 1\right)} + \sum_{i\in\tilde{\mathcal{P}}}{\mathbb{I}\left(M_{i,:}z = 0\right)}\right) 
  \end{aligned}
\end{equation}
which measures the fraction of inconsistencies between the observed and predicted test outcomes. Thus, the sparse recovery problem is formulated as follows:
\begin{equation}\label{eq:recovery_problem_noisy}
  \begin{gathered}
    \min_{z\in\left\{0,1\right\}^n} \gamma\left\|z\right\|_0 + \left\|Bz\right\|_1 + \mu d_H\left(\tilde{y},\hat{y}\left(z\right)\right) \\
    \mathrm{s.t.} \ z_j=0,\ j\notin \mathcal{U}_{\epsilon},\\
  \end{gathered}
\end{equation}
where $\mu>0$ controls the penalty on measurement inconsistency. The proposed decoder accounts for measurement inconsistency through the Hamming-distance term, whereas the sparsity and graph total variation terms promote the sparsity and the piecewise-constant structure of $x$, respectively.

\subsection{Theoretical Analysis}
\subsubsection{Sufficient Condition for Exact Recovery in the Noiseless Setting}
We first present the sufficient condition on the number of tests required for exact recovery under the Bernoulli pooling design in the noiseless setting.

\begin{theorem}[Sufficient Condition for Noiseless Recovery]\label{thm:sufficient condition for noiseless recovery}
  Consider the system model in~(\ref{eq:system_noiseless}), where $x\in\left\{0,1\right\}^{n}$ is a $K$-sparse signal with support $\mathcal{I}\in \mathcal{F}_{K,k}$, and $M\in\left\{0,1\right\}^{m\times n}$ is a Bernoulli random matrix with $p=1/\left(1+K\right)$. Define $\lambda = 1+d_{max}/\gamma$ and let $\beta\geq \lambda$. Then with probability at least $1-\delta$, $x$ is the unique minimizer of~(\ref{eq:recovery_problem_noiseless_linear}), provided that
  \begin{equation}
    m\geq \max\left\{m_1,m_2,m_3,m_4\right\},
  \end{equation}
  where 
  \begin{equation}
    \begin{aligned}
      & m_1=e(K+1)\log\left({4\left(n-K\right)}/{\delta \beta K}\right), \\
      & m_2=e(K+1)\log\left({4Kd_{max}}/{\delta}\right), \\
      & m_3=8eK\log\left({4K^2}/{\delta}\right),\\
      & m_4=2e^{\beta+1}K\left(\log\left({4K^2}/{\delta}\right) + \lambda \log\left(\beta K+1\right)\right).
    \end{aligned}
  \end{equation}
\end{theorem}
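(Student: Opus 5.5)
The plan is to show that, with high probability over the Bernoulli matrix, every feasible $z\neq x$ of~(\ref{eq:recovery_problem_noiseless_linear}) has strictly larger objective than $x$. Write $J(z)=\gamma\|z\|_0+\|Bz\|_1$. Any feasible $z$ must vanish outside the candidate set $\mathcal{U}$ in~(\ref{eq:candidate_set}), and $x$ itself is feasible. So I would split the failure event into four bad events, each with probability at most $\delta/4$, one for each of $m_1,\dots,m_4$, and finish with a union bound.

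\textbf{Bounding the perturbation of the objective.} Let $\mathcal{S}=\operatorname{supp}(z)$, with $\mathcal{A}=\mathcal{S}\setminus\mathcal{I}$ (false positives) and $\mathcal{R}=\mathcal{I}\setminus\mathcal{S}$ (missed infections). Adding or removing one vertex changes $\|Bz\|_1$ by at most $d_{max}$. Hence
\[
J(z)-J(x)\;\geq\;\gamma(|\mathcal{A}|-|\mathcal{R}|)-d_{max}(|\mathcal{A}|+|\mathcal{R}|).
\]
The lower bound is negative only if $|\mathcal{R}|$ is comparable to $|\mathcal{A}|$, roughly $|\mathcal{A}|\le\lambda|\mathcal{R}|$ with $\lambda=1+d_{max}/\gamma$. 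So it suffices to rule out every feasible $z$ that drops a set $\mathcal{R}\neq\emptyset$ of true infections while adding at most $\lambda|\mathcal{R}|$ (and certainly at most $\beta|\mathcal{R}|$) false positives. Feasibility requires every positive pool that contains only $\mathcal{R}$-members among the infected to also contain some vertex of $\mathcal{A}$.

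\textbf{The four events.}
\begin{itemize}
\item[(E1)] This event bounds the size of $\mathcal{U}\setminus\mathcal{I}$. An uninfected $j$ survives with probability $(1-p(1-p)^K)^m\le \exp(-m/(e(K+1)))$. By Markov's inequality, $|\mathcal{U}\setminus\mathcal{I}|\le\beta K$ with probability $1-\delta/4$ once $m\ge m_1$.
\item[(E2)] This event is tied to neighbours of $\mathcal{I}$: at most $Kd_{max}$ boundary vertices are all eliminated, via a union bound, if $m\ge m_2$. This matters for the case where adding boundary vertices cuts the TV term.
\item[(E3)] Each $i\in\mathcal{I}$ has at least $m/(2e(K+1))$ pools in which it is the unique infected member. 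This follows from a Chernoff bound on a Binomial with mean $\ge m/(e(K+1))$ and a union bound over $K$ items, giving $m_3$.
\item[(E4)] This is the key event. Fix $\mathcal{R}\subseteq\mathcal{I}$ of size $r$ and a set $\mathcal{A}\subseteq\mathcal{U}\setminus\mathcal{I}$ of size $a\le\lambda r$. The probability that $\mathcal{A}$ covers all "private" pools of $\mathcal{R}$ is at most $(1-(1-p)^{a})^{\,\cdot}$ per pool. That gives roughly $(1-e^{-\beta})^{m/(e(K+1))}\lesssim\exp(-m e^{-\beta-1}/K)$ per configuration. Union over $\binom{K}{r}\binom{\beta K}{\lambda r}\le (K^2(\beta K+1)^{\lambda})^{r}$ configurations, which is legitimate given (E1), and summing over $r$ produces exactly $m_4$ (the factor $2$ absorbs the Chernoff slack).
\end{itemize}

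On $E_1\cap\cdots\cap E_4$, any feasible $z\ne x$ has $J(z)>J(x)$, and the union bound yields probability at least $1-\delta$.

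\textbf{Expected obstacle.} The main obstacle is (E4): making the covering-probability calculation uniform over all $(\mathcal{R},\mathcal{A})$ while controlling dependence between pools, and carefully tying the $\lambda r$ budget of false positives to the TV perturbation bound so that the exponent $\lambda\log(\beta K+1)$ emerges cleanly.
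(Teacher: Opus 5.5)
Your outline has the paper's architecture: four events for $m_1,\dots,m_4$, a union bound, then a deterministic argument. However, two steps fail as written.

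\textbf{First gap: the perturbation bound.} Your bound $J(z)-J(x)\geq\gamma(|\mathcal{A}|-|\mathcal{R}|)-d_{max}(|\mathcal{A}|+|\mathcal{R}|)$, with your $\mathcal{R}=\mathcal{I}\setminus\mathcal{S}$, charges $d_{max}$ for every false positive. It therefore does not yield $|\mathcal{A}|\leq\lambda|\mathcal{R}|$. When $\gamma>d_{max}$ it only gives $|\mathcal{A}|\leq\frac{\gamma+d_{max}}{\gamma-d_{max}}|\mathcal{R}|$, which exceeds the budget $\lambda|\mathcal{R}|$ covered by (E4). When $\gamma\leq d_{max}$ it excludes nothing. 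That case is allowed by the theorem and is the regime of the experiments ($\gamma=0.01$). In particular it does not exclude supersets $\mathcal{S}=\mathcal{I}\cup\mathcal{A}$, which are always feasible and so can only be ruled out through the objective.

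You listed (E2) but never used it, and it is exactly the missing step. On (E2), no vertex of $\mathcal{U}\setminus\mathcal{I}$ is adjacent to $\mathcal{I}$. Hence $|\partial_{\mathcal{E}}\mathcal{S}|=|\partial_{\mathcal{E}}(\mathcal{I}\setminus\mathcal{R})|+|\partial_{\mathcal{E}}\mathcal{A}|$, and the $d_{max}$ penalty falls only on the dropped infections:
\[
J(z)-J(x)\geq\gamma\left(|\mathcal{A}|-|\mathcal{R}|\right)-d_{max}|\mathcal{R}|+|\partial_{\mathcal{E}}\mathcal{A}|.
\]
This gives $J(z)-J(x)=\gamma|\mathcal{A}|+|\partial_{\mathcal{E}}\mathcal{A}|>0$ when $\mathcal{R}=\varnothing$, and $|\mathcal{A}|\leq\lambda|\mathcal{R}|$ whenever $J(z)\leq J(x)$. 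That is precisely the paper's two-case argument.

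\textbf{Second gap: the exponent in (E4).} Your (E4) estimate is missing a factor $r=|\mathcal{R}|$ in the exponent. You use only the roughly $m/(2e(K+1))$ private pools of a single infected vertex. So each configuration $(\mathcal{R},\mathcal{A})$ fails with probability about $\exp(-m/(2e^{\beta+1}K))$, independent of $r$. Meanwhile the number of configurations grows exponentially in $r$; your own bound is $(K^2(\beta K+1)^{\lambda})^{r}$. For $r$ comparable to $K$, the union bound then forces $m$ to grow like $e^{\beta+1}K^{2}$, not like the $e^{\beta+1}K\log(\cdot)$ scaling of $m_4$.

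What you need is that the number of pools meeting $\mathcal{R}$ while avoiding $\mathcal{I}\setminus\mathcal{R}$ is of order $rm/K$. Then the failure probability is $\exp(-rm/(2e^{\beta+1}K))$, which beats the $r$-th-power count term by term. The paper gets this by proving $T_{\mathcal{D}}\geq mp_d/2$ with $p_d\geq d/(eK)$ simultaneously for all nonempty $\mathcal{D}\subseteq\mathcal{I}$; this is where $m_3$ comes from. Alternatively, your singleton event (E3) suffices once you note that the pools in which $i$ is the unique infected member are disjoint for distinct $i$. Summing over $i\in\mathcal{R}$ then gives at least $rm/(2e(K+1))$ such pools, at the cost of $K+1$ in place of $K$ in the constants.

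The dependence issue you flag is benign. Given the columns indexed by $\mathcal{I}$, membership in $\mathcal{U}$ and the events (E1)--(E2) depend only on entries in negative rows. The covering event involves only entries of $\mathcal{A}$ in private rows, which are positive. So those entries remain i.i.d.\ Bernoulli$(p)$ after conditioning, and you may union-bound over subsets of the now-fixed set $\mathcal{U}\setminus\mathcal{I}$.
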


The proof proceeds in three steps. We first establish that, the following statements hold with high probability: (i) the candidate set $\mathcal{U}$ contains at most $\left(1+\beta\right)K$ samples, (ii) all negative neighbors of infections are eliminated. We then prove that, with high probability, any $\mathcal{S}\subseteq\mathcal{U}$ with $\mathcal{S}\neq\mathcal{I}$ either violates the constraints or attains a larger objective function value than $\mathcal{I}$. The detailed proof is given in Appendix~\ref{proof:sufficient condition for noiseless recovery}.

\begin{remark}
  The above results demonstrate the benefit of incorporating the graph-based prior into group testing. For fixed $\beta$, $\lambda$ and $\delta$, if $n \gg K^2 d_{max}$, the number of tests required for exact recovery is $\mathcal{O}\left(K\log\left({n}/{K}\right)\right)$. Notably, this scaling is of the same order as the classical information-theoretic lower bound under the combinatorial prior. 
\end{remark}

\subsubsection{Performance Analysis of Graph-Aware Pooling Design}  
In the graph-aware pooling design, the pooling matrix $M$ is generated from $P^*$, i.e., $M_{i,j}\sim Bernoulli\left(P_{i,j}^{\ast}\right)$. The entries of $P^{\ast}$ are given by $P^{\ast}_{i,j} = u_i v_j\exp\left(-C_{i,j}/\varepsilon\right)$, where $C_{i,j}$ is the shortest-path distance between the representative vertex $c_i\in\mathcal{C}$ and the vertex $j$. The variables $u$ and $v$ are determined by the marginal constraints in~(\ref{eq:variables_constraints}). Let $v_{max}=\max_{j\in \{1,\ldots,n\}} {v_{j}}$ and $v_{min}=\min_{j\in \{1,\ldots,n\}} {v_{j}}$.
Define
\begin{equation}
  \theta=\varepsilon\log\left(\frac{v_{max}}{v_{\min}}\right),
\end{equation}
and
\begin{equation}
    \Gamma = \min \left\{n, \frac{1-\left(d_{max}\exp\left(-1/\varepsilon\right)\right)^{D+1}}{1-d_{max}\exp\left(-1/\varepsilon\right)}\right\}, 
\end{equation}
for $d_{max}\neq \exp\left({1}/{\varepsilon}\right)$, where $D$ is the diameter of $\mathcal{G}$. If $d_{max}=\exp\left(1/\varepsilon\right)$, $\Gamma=D+1$. For each sample $j\notin\mathcal{I}$, let $q_j^{OT}$ and $q_j^{B}$ denote the probabilities that sample $j$ remains in the candidate set $\mathcal{U}$ under the graph-aware and Bernoulli pooling designs, respectively. We prove that, the graph-aware design increases the probability of eliminating negative samples, thereby yielding a smaller candidate set $\mathcal{U}$.

\begin{theorem}\label{thm:advantage_OT}
  Suppose that, for every sample $j\notin\mathcal{I}$, there exists at least $r$ representative vertices $c_i\in\mathcal{C}$ satisfying 
  \begin{equation}\label{eq:C_constraints}
    \begin{gathered}
      C_{i,j}\leq R_1, \\
      C_{i,j'}-C_{i,j}\geq R_2, \ for\ all\ j'\in\mathcal{I}.
    \end{gathered}
  \end{equation}
  If 
  \begin{equation}
    \begin{gathered}
      \varepsilon\log\left(\frac{eKn}{\left(K+1\right)\Gamma}\right) - \theta > R_1 > \varepsilon\log\left(\frac{n}{\left(K+1\right)\Gamma}\right) - \theta, \\
      R_2 > \epsilon\log\left(\frac{K}{1-\frac{\left(K+1\right)\Gamma}{eKn}\exp\left(\frac{R_1+\theta}{\varepsilon}\right)}\right) + \theta, \\
      r > \frac{m(K+1)\Gamma\exp\left(\frac{R_1 +\theta}{\varepsilon}\right)}{eKn\left(1-K\exp\left(-\frac{R_2-\theta}{\varepsilon}\right)\right)},
    \end{gathered}
  \end{equation}
  then $q_{j}^{OT}< q_j^{B}$ holds for any sample $j\notin\mathcal{I}$. 
\end{theorem}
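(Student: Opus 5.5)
The plan is to bound the two survival probabilities separately and then compare them. Under the Bernoulli design I will compute $q_j^{B}$ exactly. Under the graph-aware design I will bound $q_j^{OT}$ from above using only the $r$ ``good'' representatives guaranteed by~(\ref{eq:C_constraints}). Throughout I use that the entries of $M$ are independent across pools, so the survival probability of $j$ factorizes over pools. Sample $j\notin\mathcal{I}$ survives iff no pool contains $j$ together with no infected sample.

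\emph{Step 1 (Bernoulli design).} With $p=1/(K+1)$, pool $i$ eliminates $j$ with probability $p(1-p)^K=\frac{1}{K}\left(\frac{K}{K+1}\right)^{K+1}\leq \frac{1}{eK}$. Hence
\begin{equation*}
q_j^{B}=\left(1-p(1-p)^K\right)^m\geq\left(1-\tfrac{1}{eK}\right)^m .
\end{equation*}

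\emph{Step 2 (entrywise bounds on $P^\ast$).} From~(\ref{eq:variables_constraints}) we have $u_i=np/\sum_{j'}v_{j'}e^{-C_{i,j'}/\varepsilon}$. To bound the denominator I group vertices by their distance $d$ from $c_i$. There are at most $d_{max}^d$ vertices at distance $d$ and at most $n$ vertices in total, so
\begin{equation*}
\sum_{j'}v_{j'}e^{-C_{i,j'}/\varepsilon}\leq v_{max}\,\Gamma ,
\end{equation*}
with the geometric sum $\sum_{d=0}^{D}(d_{max}e^{-1/\varepsilon})^d$ giving exactly $\Gamma$. This yields $P^\ast_{i,j}\geq \frac{n}{(K+1)\Gamma}e^{-(R_1+\theta)/\varepsilon}=:a$ whenever $C_{i,j}\leq R_1$. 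The two-sided condition on $R_1$ is then equivalent to $\frac{1}{eK}<a<1$, so $a$ is a legitimate probability lower bound. For $j'\in\mathcal{I}$ I use the ratio form of~(\ref{eq:P_form}):
\begin{equation*}
\frac{P^\ast_{i,j'}}{P^\ast_{i,j}}=\frac{v_{j'}}{v_j}\,e^{-(C_{i,j'}-C_{i,j})/\varepsilon}\leq e^{-(R_2-\theta)/\varepsilon},
\end{equation*}
and therefore $P^\ast_{i,j'}\leq e^{-(R_2-\theta)/\varepsilon}$ because $P^\ast_{i,j}\leq 1$.

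\emph{Step 3 (per-pool elimination under OT).} For a good representative $c_i$, a union bound gives
\begin{equation*}
\mathbb{P}\big(M_{i,j}=1,\ M_{i,j'}=0\ \forall j'\in\mathcal{I}\big)\geq P^\ast_{i,j}\Big(1-\sum_{j'\in\mathcal{I}}P^\ast_{i,j'}\Big)\geq a\left(1-Ke^{-(R_2-\theta)/\varepsilon}\right)=:b .
\end{equation*}
The condition on $R_2$ rearranges to $Ke^{-(R_2-\theta)/\varepsilon}<1-\frac{1}{eKa}$, i.e.\ $b>\frac{1}{eK}$. So each good pool eliminates $j$ with strictly larger probability than any Bernoulli pool does. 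Dropping the remaining pools, whose factors are at most $1$, gives $q_j^{OT}\leq(1-b)^r\leq e^{-rb}$.

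\emph{Step 4 (comparison), the expected main obstacle.} The condition on $r$ is precisely $rb>\frac{m}{eK}$. The difficulty is that this must be matched against $q_j^B\geq(1-\frac{1}{eK})^m$, and $-\log(1-x)>x$ goes the wrong way. A direct $e^{-rb}$ versus $e^{-m/(eK)}$ comparison therefore leaves a gap of order $m/(eK)^2$. My first attempt is to avoid the exponential relaxation on both sides and compare $(1-b)^r$ with $(1-\frac{1}{eK})^m$ directly. I would use that $x\mapsto \log(1-x)/x$ is decreasing, together with $b>\frac{1}{eK}$, to get
\begin{equation*}
r\log(1-b)=rb\cdot\frac{\log(1-b)}{b}<\frac{m}{eK}\cdot eK\log\!\left(1-\tfrac{1}{eK}\right)=m\log\!\left(1-\tfrac{1}{eK}\right).
\end{equation*}
This gives $q_j^{OT}\leq(1-b)^r<(1-\frac{1}{eK})^m\leq q_j^B$ for every $j\notin\mathcal{I}$. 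If that monotonicity step needs tightening, the fallback is to retain the slack in Step 1, namely $p(1-p)^K\geq \frac{1}{e(K+1)}$ versus $\leq\frac{1}{eK}$, or to carry a strict inequality in $b>\frac{1}{eK}$ through to absorb the second-order term.
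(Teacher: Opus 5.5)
Your proof is correct. Steps 1--3 reproduce the paper's bounds: the estimate $p(1-p)^K\le 1/(eK)$, the $\Gamma$-bound on the normalizer $\sum_{j'}v_{j'}e^{-C_{i,j'}/\varepsilon}$, the ratio bound giving $P^{\ast}_{i,j'}\le e^{-(R_2-\theta)/\varepsilon}$, and the union bound over $\mathcal{I}$. The difference lies in how the product over pools is combined.

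The paper applies AM--GM to all $m$ factors, $\prod_{i=1}^{m}(1-x_i)\le(1-\bar{x})^m$, where $x_i=P^{\ast}_{i,j}\prod_{j'\in\mathcal{I}}(1-P^{\ast}_{i,j'})$ and $\bar{x}=\frac{1}{m}\sum_{i}x_i\ge rb/m>1/(eK)$. Then $q_j^{OT}$ and the lower bound on $q_j^{B}$ carry the same exponent $m$, so the obstacle you anticipated in Step 4 never appears.

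You instead discard the non-good pools to get $(1-b)^r$, and then trade exponent $r$ for $m$ via the decreasing chord slope $\log(1-x)/x$. This is the same concavity of $x\mapsto\log(1-x)$ in another form: Jensen on $r$ copies of $b$ and $m-r$ zeros gives $r\log(1-b)\le m\log(1-rb/m)$. Your step is valid as written, once you note $\log(1-b)<0$ when using $rb>m/(eK)$, so the fallbacks you list are unnecessary.

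Your ordering does buy something. The paper replaces $\prod_{j'\in\mathcal{I}}(1-P^{\ast}_{i,j'})$ by $1-\sum_{j'\in\mathcal{I}}P^{\ast}_{i,j'}$ in every pool. For pools whose representative is not good, this can be negative, so the paper's lower bound $\bar{x}\ge rb/m$ tacitly needs those terms to be nonnegative (fixable by keeping the product there). You apply the union bound only on the $r$ good pools and bound the remaining factors by $1$, which sidesteps this.

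As a side remark, the inequality $b>1/(eK)$ that your Step 4 relies on also follows directly from $rb>m/(eK)$ and $r\le|\mathcal{C}|=m$.
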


\begin{proof}
  Suppose that, for every sample $j\notin\mathcal{I}$, there exists at least $r$ representative vertices $c_i\in\mathcal{C}$ satisfying~(\ref{eq:C_constraints}). Given a sample $j\notin\mathcal{I}$, the probability that it remains in the candidate set under the Bernoulli pooling design is
  \begin{equation}
    q_j^{B}=\left(1-p(1-p)^K\right)^m \geq \left(1-\frac{1}{eK}\right)^m.
  \end{equation}  
  Under the graph-aware pooling design, the probability $q_j^{OT}$ satisfies
  \begin{equation}\label{eq:p_OT_inequality}
    \begin{aligned}
      q_j^{OT} & = \prod_{i=1}^{m}\left(1-P_{i,j}^{\ast}\prod_{j'\in \mathcal{I}}\left(1-P_{i,j'}^{\ast}\right)\right) \\
      & \leq \left(1-\frac{1}{m}\sum_{i=1}^{m}{P_{i,j}^{\ast}\prod_{j'\in \mathcal{I}}\left(1-P_{i,j'}^{\ast}\right)}\right)^m \\
      & \leq \left(1-\frac{1}{m}\sum_{i=1}^{m}{P_{i,j}^{\ast}\left(1-\sum_{j'\in\mathcal{I}}{P_{i,j'}^{\ast}}\right)}\right)^m.
    \end{aligned}
  \end{equation}

  According to~(\ref{eq:P_form}) and~(\ref{eq:variables_constraints}), we have
  \begin{equation}
      P_{i,j}^{\ast} = \frac{v_j\exp\left(-C_{i,j}/\varepsilon\right)np}{\sum_{j'=1}^{n}{v_{j'}\exp\left(-C_{i,j'}/\varepsilon\right)}}.
  \end{equation}
  For a graph with maximum degree $d_{max}$ and diameter $D$, the number of vertices at distance $h$ from any vertex is at most $d_{max}\left(d_{max}-1\right)^{h-1}$. Thus,
  \begin{equation}
    \begin{aligned}
      \sum_{j'=1}^{n}{\exp\left(-\frac{C_{i,j'}}{\varepsilon}\right)} \leq 1 + \sum_{h=1}^{D} \left(d_{max}\exp\left(-\frac{1}{\varepsilon}\right)\right)^h.
    \end{aligned}
  \end{equation}
  By the definition of $\Gamma$, we have 
  \begin{equation}
    \sum_{j'=1}^{n}{v_{j'}\exp\left(-\frac{C_{i,j'}}{\varepsilon}\right)} \leq v_{max}\sum_{j'=1}^{n}{\exp\left(-\frac{C_{i,j'}}{\varepsilon}\right)} \leq v_{max}\Gamma,
  \end{equation}
  and 
  \begin{equation}
    P_{i,j}^{\ast} \geq \frac{npv_{min}}{\Gamma v_{max}}\exp\left(-\frac{C_{i,j}}{\varepsilon}\right) = \frac{np}{\Gamma}\exp\left(-\frac{C_{i,j}+\theta}{\varepsilon}\right).
  \end{equation}

  For a vertex $c_i\in\mathcal{C}$ satisfying~(\ref{eq:C_constraints}), the probability $P_{i,j}^{\ast}$ is lower bounded by
  \begin{equation}\label{eq:P_inequality}
    P_{i,j}^{\ast} \geq \frac{np}{\Gamma}\exp\left(-\frac{R_1 + \theta}{\varepsilon}\right) .
  \end{equation}
  Then, for every sample $j'\in\mathcal{I}$, we have
   \begin{equation}
    \begin{aligned}
      \frac{P_{i,j'}^{\ast}}{P_{i,j}^{\ast}}  = \frac{v_{j'}}{v_j}\exp\left(-\frac{C_{i,j'}-C_{i,j}}{\varepsilon}\right) 
      \leq \frac{v_{max}}{v_{min}}\exp\left(-\frac{R_2}{\varepsilon}\right) = \exp\left(-\frac{R_2-\theta}{\varepsilon}\right).
    \end{aligned}
   \end{equation}
   Thus, 
   \begin{equation}\label{eq:P_sum_inequality}
    \sum_{j'\in\mathcal{I}}{P_{i,j'}^{\ast}} \leq K \exp\left(-\frac{R_2-\theta}{\varepsilon}\right).
   \end{equation}

   Combining~(\ref{eq:p_OT_inequality}),~(\ref{eq:P_inequality}) and~(\ref{eq:P_sum_inequality}), the probability $q_{j}^{OT}$ satisfies
   \begin{equation}
    \begin{aligned}
     q_j^{OT} \leq  \left(1-{\frac{rnp}{m\Gamma}\exp\left(-\frac{R_1 + \theta}{\varepsilon}\right)\left(1-K \exp\left(-\frac{R_2-\theta}{\varepsilon}\right)\right)}\right)^m,
    \end{aligned} 
   \end{equation}
   which is lower than $q_{j}^{B}$ for all $j\notin \mathcal{I}$ by the stated conditions on $R_1$,$R_2$ and $r$.
\end{proof}

The conditions in Theorem~\ref{thm:advantage_OT} require that, for each negative sample, there are sufficiently many representative vertices that are close to it but relatively far from the positive samples. Specifically, a smaller $R_1$ imposes a tighter upper bound on the distance between the negative sample and the representative vertex, making the negative sample more likely to be included in the corresponding pool. Meanwhile, a larger $R_2$ requires the positive samples to be farther from the same representative vertex than the negative sample, making them less likely to be in the same pool. These conditions are mild because they need to hold only for a sufficient number of representative vertices for each negative sample. Consequently, the graph-aware pooling design increases the probability that a negative sample appears in a pool containing no positives, making it more likely to be eliminated from the candidate set $\mathcal{U}$ and reducing the feasible search space for decoding.

\subsubsection{Sufficient Condition for Exact Recovery in the Noisy Setting}
We present the sufficient condition on the number of tests required for exact recovery under the Bernoulli pooling design in the noisy setting.

\begin{theorem}[Sufficient Condition for Noisy Recovery]\label{thm:sufficient condition for noisy recovery}
  Consider the system model in~(\ref{eq:system_noisy}) with flip probability $\epsilon\in\left(0,1/2\right)$, where $x\in\left\{0,1\right\}^{n}$ is a $K$-sparse signal with support $\mathcal{I}\in \mathcal{F}_{K,k}$, and $M\in\left\{0,1\right\}^{m\times n}$ is a Bernoulli random matrix with $p=1/\left(1+K\right)$. Define $\lambda = 1+d_{max}/\gamma$. Let $\beta\geq \lambda$ and $\mu \geq 2e^{\beta + 1}K(d_{max}+\gamma)/\left(1-2\epsilon\right)$. Then with probability at least $1-\delta$, $x$ is the unique minimizer of~(\ref{eq:recovery_problem_noisy}), provided that
  \begin{equation}
    m\geq \max\left\{m_1^{\epsilon},m_2^{\epsilon},m_3^{\epsilon}\right\},
  \end{equation}
  where 
  \begin{equation}
    \begin{aligned}
      & m_1^{\epsilon}= \frac{8e\left(1+(e-2)\epsilon\right)(K+1)}{(1-2\epsilon)^2} \log\left({3(n-K)}/{\delta \beta K}\right),\\
      & m_2^{\epsilon}=\frac{4e\left(1+(6e-2)\epsilon\right)(K+1)}{3(1-2\epsilon)^2} \log\left({3K}/{\delta}\right),\\
      & m_3^{\epsilon}=\frac{16e^{\beta+1}(2-\epsilon)K}{3(1-2\epsilon)^2}\log\left({n}/{\log\left(1+\delta/3\right)}\right).\\
    \end{aligned}
  \end{equation}
\end{theorem}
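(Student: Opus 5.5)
The argument mirrors the three-step proof of Theorem~\ref{thm:sufficient condition for noiseless recovery}, with Chernoff-type bounds replacing the zero-error elimination events. We allot failure probability $\delta/3$ to each of three events and take a union bound.

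\textbf{Step 1: the candidate set is small.} Fix $j\notin\mathcal{I}$. Under the Bernoulli design with $p=1/(K+1)$, each pool independently contains $j$ and no infected sample with probability $\rho=p(1-p)^K\ge 1/(e(K+1))$. Such a pool is observed negative with probability $1-\epsilon$. A pool containing $j$ together with an infection is observed negative with probability $\epsilon$. Hence $\tilde N_j$ is a sum of $m$ i.i.d.\ Bernoulli variables with mean at least $p\bigl((1-p)^K(1-\epsilon)+(1-(1-p)^K)\epsilon\bigr)$. For $j\in\mathcal{I}$, $\tilde N_j$ counts only flipped positive pools, so its mean is at most $mp\epsilon$. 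The two means differ by at least $m\rho(1-2\epsilon)$. We set $\tau$ at the midpoint and apply Bernstein's inequality; its variance proxy of order $mp(1+(e-2)\epsilon)$ is what produces the factor $(1+(e-2)\epsilon)/(1-2\epsilon)^2$. This gives $\mathbb{P}(j\in\mathcal{U}_\epsilon)\le \exp(-m(1-2\epsilon)^2/(8e(1+(e-2)\epsilon)(K+1)))$. Markov's inequality then bounds $|\mathcal{U}_\epsilon\setminus\mathcal{I}|\le\beta K$ with probability $1-\delta/3$ once $m\ge m_1^\epsilon$. The analogous lower-tail bound for $j\in\mathcal{I}$, combined with a union bound over the $K$ infections, shows $\mathcal{I}\subseteq\mathcal{U}_\epsilon$ once $m\ge m_2^\epsilon$. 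The constants in $m_2^\epsilon$ suggest a Bernstein bound with a different variance term.

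\textbf{Step 2: compare objectives on $\mathcal{U}_\epsilon$.} Condition on Step~1, so that $|\mathcal{U}_\epsilon|\le(1+\beta)K$ and $\mathcal{I}\subseteq\mathcal{U}_\epsilon$. For any feasible $\mathcal{S}\ne\mathcal{I}$, the regularizer changes by at most $(\gamma+d_{max})|\mathcal{S}\triangle\mathcal{I}|$, because each added or removed vertex alters $\|Bz\|_1$ by at most $d_{max}$. It therefore suffices to show
\[
\mu\bigl(d_H(\tilde y,\hat y(\mathcal{S}))-d_H(\tilde y,\hat y(\mathcal{I}))\bigr)>(\gamma+d_{max})|\mathcal{S}\triangle\mathcal{I}|.
\]
Tests on which $\hat y(\mathcal{S})$ and $\hat y(\mathcal{I})$ agree contribute nothing. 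Each disagreeing test raises the Hamming count with probability $1-\epsilon$ and lowers it with probability $\epsilon$, independently. The Hamming difference is thus $\frac1m\sum_{i\in\Delta}\xi_i$ with $\mathbb{E}\xi_i=1-2\epsilon$, where $\Delta$ is the set of disagreeing tests.

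Take $\ell=|\mathcal{S}\triangle\mathcal{I}|\ge1$. Consider a vertex in $\mathcal{S}\triangle\mathcal{I}$, either a missing infection or an extra sample. A pool isolating that vertex from the rest of $\mathcal{S}\cup\mathcal{I}$ occurs with probability at least $p(1-p)^{(1+\beta)K}\ge 1/(e^{\beta+1}K)$. Hence $\mathbb{E}|\Delta|\ge m\ell'/(e^{\beta+1}K)$ for a suitable number $\ell'$ of such vertices. A Chernoff bound gives $|\Delta|\ge m/(2e^{\beta+1}K)$, and a Hoeffding bound then gives $\sum_{\Delta}\xi_i\ge(1-2\epsilon)|\Delta|/2$, except with probability $\exp(-cm(1-2\epsilon)^2/(e^{\beta+1}K))$. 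On this event the Hamming gain is at least $(1-2\epsilon)/(2e^{\beta+1}K)$ times a quantity proportional to $\ell$. The choice $\mu\ge 2e^{\beta+1}K(d_{max}+\gamma)/(1-2\epsilon)$ is exactly what makes $\mu$ times this gain exceed $(\gamma+d_{max})\ell$.

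\textbf{Step 3: union bound over competitors; this is the main obstacle.} The competitors are all subsets of $\mathcal{U}_\epsilon$, but $\mathcal{U}_\epsilon$ is random, so we cannot simply count $2^{(1+\beta)K}$ sets. The form of $m_3^\epsilon$, namely $\log(n/\log(1+\delta/3))$, suggests the intended route. Stratify by $\ell$ and union bound over at most $n^{\ell}$ choices of symmetric difference. Show that the failure probability for each $\ell$ is at most $(n e^{-cm})^{\ell}$. Then require $\sum_{\ell\ge1}(ne^{-cm})^\ell\le$ something like $\log(1+\delta/3)$; this would explain the double logarithm.

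The difficulty is making the per-competitor exponent scale linearly in $\ell$. This requires counting disjoint isolating pools for each vertex of the symmetric difference, rather than one pool. I would try a sum-of-independent-events argument: the isolating events for distinct vertices are disjoint per pool, so their counts are multinomial and concentrate jointly. The Bernstein constant $(2-\epsilon)$ in $m_3^\epsilon$ comes from the variance of $\xi_i$. Combining the three events yields overall success probability at least $1-\delta$.
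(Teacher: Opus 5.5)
\textbf{Genuine gap: your Step~2 never produces a Hamming gain proportional to $\ell=|\mathcal{S}\Delta\mathcal{I}|$.} As written, the Chernoff step gives only $|\Delta|\ge m/(2e^{\beta+1}K)$. So both the per-competitor failure exponent and the Hamming gain are independent of $\ell$, and this breaks the argument in two places:
\begin{itemize}
\item The union bound over roughly $\binom{n}{\ell}$ competitors needs an exponent linear in $\ell$. Otherwise $m$ must be of order $K^2\log n$.
\item Beating the regularizer change $(\gamma+d_{max})\ell$ needs a gain linear in $\ell$. Otherwise $\mu$ must grow like $K\ell$, i.e.\ up to order $K^2$.
\end{itemize}
You flag this as the main obstacle and point at the right idea, but no multinomial joint concentration is needed. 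A bound on the per-pool probability $h(\mathcal{S})=\mathbb{P}(\phi_i(\mathcal{S})\neq\phi_i(\mathcal{I}))$ suffices, which is the paper's route. Consider the events ``the pool contains exactly one vertex of $\mathcal{S}\Delta\mathcal{I}$ and no other vertex of $\mathcal{S}\cup\mathcal{I}$.'' Over the $r=a+d$ vertices these events are disjoint within a pool, and each one forces a disagreement. Hence $h(\mathcal{S})\ge (a+d)p(1-p)^{K+a-1}\ge r/(e^{\beta+1}K)$, using $p(1-p)^{K-1}\ge 1/(eK)$ and $(1-p)^a\ge e^{-\beta}$ for $a\le\beta K$. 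Your exponent $(1+\beta)K$ only gives $1/(e^{\beta+1}(K+1))$.

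Then apply a single Bernstein bound to $\sum_i W_i$, where $W_i=\mathbb{I}(\tilde y_i\neq\phi_i(\mathcal{S}))-\mathbb{I}(\tilde y_i\neq\phi_i(\mathcal{I}))\in\{-1,0,1\}$. Use $\mathbb{E}W_i=(1-2\epsilon)h(\mathcal{S})$, $\mathbb{E}W_i^2=h(\mathcal{S})$, range bound $2$, and deviation $t=m(1-2\epsilon)h(\mathcal{S})/2$. This gives failure probability $\exp(-3m(1-2\epsilon)^2h(\mathcal{S})/(16(2-\epsilon)))\le\eta^{r}$. The factor $(2-\epsilon)$ comes from $mh+2t/3=2mh(2-\epsilon)/3$, not from the variance alone. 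On the complementary event, $H(\mathcal{S})-H(\mathcal{I})>(1-2\epsilon)h(\mathcal{S})/2\ge(1-2\epsilon)r/(2e^{\beta+1}K)$, which is exactly what the stated lower bound on $\mu$ needs.

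\textbf{Bookkeeping that affects the stated constants.}
\begin{itemize}
\item Your two-stage route (Chernoff on $|\Delta|$ at level $1/2$, then Hoeffding at level $1/2$) would yield a gain of only $(1-2\epsilon)h/4$. It would therefore need twice the stated $\mu$.
\item Counting $n^{\ell}$ competitors and summing a geometric series requires $n\eta/(1-n\eta)\le\delta/3$. That is strictly stronger than what $m_3^\epsilon$ guarantees. The $\log(1+\delta/3)$ comes from exact counting: $\sum_{d,a}\binom{K}{d}\binom{n-K}{a}\eta^{d+a}\le(1+\eta)^n-1\le e^{n\eta}-1\le\delta/3$.
\item Do not condition on Step~1 before invoking concentration, since that changes the law of $(M,e)$. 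Instead, prove the Hamming event unconditionally over the deterministic family $\mathcal{S}=(\mathcal{I}\setminus\mathcal{D})\cup\mathcal{A}$ with $\mathcal{A}\subseteq\mathcal{V}\setminus\mathcal{I}$ and $|\mathcal{A}|\le\beta K$. Then observe that on the Step~1 event every $\mathcal{S}\subseteq\mathcal{U}_\epsilon$ lies in this family, and $\mathcal{I}$ is feasible because $\mathcal{I}\subseteq\mathcal{U}_\epsilon$. This is what disposes of the randomness of $\mathcal{U}_\epsilon$.
\end{itemize}
Step~1 is otherwise fine and matches the paper. The paper uses Chernoff tails with $\tau=m(\alpha(\epsilon)+\epsilon)/(2(K+1))$ and $\alpha(\epsilon)=(1+(e-2)\epsilon)/e$, so that $\alpha(\epsilon)-\epsilon=(1-2\epsilon)/e$. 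One correction: the bound needed for $j\in\mathcal{I}$ is an upper-tail bound, not a lower-tail one.
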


The proof is analogous to that of the noiseless case. We first prove the following statements: (i) the candidate set $\mathcal{U}_{\epsilon}$ contains at most $\left(1+\beta\right)K$ samples, (ii) infections are not eliminated. We then establish that, with high probability, every $\mathcal{S}\subseteq\mathcal{U}_{\epsilon}$ with $\mathcal{S}\neq\mathcal{I}$ attains a larger objective function value than $\mathcal{I}$. The detailed proof is given in Appendix~\ref{proof:sufficient condition for noisy recovery}.

\begin{remark}
  For fixed $\beta$, $\lambda$ and $\delta$, the number of tests required scales as $K\log n$. Compared with the noiseless case, the gap in scaling is primarily due to relaxations used in the proof. Deriving a tighter bound remains an open problem.
\end{remark}

\subsection{Convex Relaxation and Optimization}
Although we have established theoretical guarantees for the problems in~(\ref{eq:recovery_problem_noiseless_linear}) and~(\ref{eq:recovery_problem_noisy}), they are NP-hard due to the $\ell_0$-norm and the binary constraint. To obtain tractable formulations, we consider the following convex relaxations. In the noiseless setting, the problem in~(\ref{eq:recovery_problem_noiseless_linear}) can be relaxed as follows~\cite{malioutov2012boolean}:
\begin{equation}\label{eq:recovery_problem_noiseless_convex}
  \begin{gathered}
    \min_{z\in\left[0,1\right]^n} \gamma\left\|z\right\|_1 + \left\|Bz\right\|_1 \\
    \mathrm{s.t.} \ M_{i,:}z=0,\ i\in\mathcal{N}; \quad M_{i,:}z\geq 1,\ i\in\mathcal{P}. \\
  \end{gathered}
\end{equation}

To solve~(\ref{eq:recovery_problem_noiseless_convex}), we first apply COMP~\cite{aldridge2014group} to reduce the search space. Any sample appearing in negative pools must be negative and is removed from the candidate set. Then, we solve the optimization problem over the candidate set $\mathcal{U}$ by the alternating direction method of multipliers (ADMM)~\cite{boyd2011distributed,parikh2014proximal}. Finally, the binary estimate is obtained as $\hat{x}=\mathbb{I}\left(\hat{z}>0.5\right)$, where $\hat{z}$ denotes the solution to~(\ref{eq:recovery_problem_noiseless_convex}).

In the noisy setting, the Hamming-distance term in~(\ref{eq:recovery_problem_noisy}) is non-convex due to the indicator function. For a binary vector $z$, $\mathbb{I}\left(M_{i,:}z = 0\right)$ is equal to $\left( 1-M_{i,:}z\right)_{+}$, whereas $\mathbb{I}\left(M_{i,:}z \geq 1\right)$ is upper bounded by $M_{i,:}z$. Thus, we obtain the relaxation of~(\ref{eq:recovery_problem_noisy}):
\begin{equation}\label{eq:recovery_problem_noisy_convex}
  \begin{gathered}
    \min_{z\in\left[0,1\right]^n} \gamma\left\|z\right\|_1 + \left\|Bz\right\|_1 + \frac{\mu}{m} \left(\sum_{i\in \tilde{\mathcal{N}}} {M_{i,:}z} + \sum_{i\in \tilde{\mathcal{P}}} {\left( 1-M_{i,:}z\right)_{+}}\right) \\
    \mathrm{s.t.} \ z_j=0,\ j\notin \mathcal{U}_{\epsilon}.\\
  \end{gathered}
\end{equation}

This problem is solved in the same way. The search space is restricted to $\mathcal{U}_{\epsilon}$, and then we apply ADMM. The binary estimate is obtained by thresholding the solution to~(\ref{eq:recovery_problem_noisy_convex}).

\section{Numerical Experiments}\label{sec:numerical experiments}
In this section, we conduct numerical simulations to evaluate the performance of our method. We first describe the experimental setup, including the datasets, parameter configurations, and evaluation metrics. Then, we present the results under different pooling designs and decoders.

\subsection{Experimental Setup}
\subsubsection{Datasets}
The experiments are conducted on synthetic and real-world networks. The synthetic networks include Erd\H{o}s-R\'enyi (ER) random network, small-world network, Barab\'as-Albert (BA) scale-free network~\cite{barabasi1999emergence}, and stochastic block model (SBM)~\cite{holland1983stochastic}, each consisting of $n=1000$ vertices. The ER random network is generated with an edge connection probability of ${6}/{(n-1)}$. The small-world network is constructed according to the Watts-Strogatz model~\cite{watts1998collective}, where each vertex is initially connected to its six nearest neighbors and then each edge is rewired with probability of $0.1$. For the BA network, the graph is generated by adding new vertices, each with two edges that are preferentially attached to existing high-degree nodes. The SBM network is generated with 10 communities with the intra-community and inter-community connection probabilities set to $0.1$ and $0.005$, respectively.

We also consider two empirical contact networks. The SGinfectious dataset~\cite{isella2011s} is provided by the ScoioPatterns sensing platform, with active Radio-Frequency Identification Devices embedded in badges to record face-to-face proximity relations of participants\cite{chai2021information}. We use 15 days of contact records in our experiments. The Sexual contact dataset~\cite{rocha2011simulated, ru2023inferring} is an empirical temporal network of sexual contacts in Brazil, and we use 90 days of records. The adjacency matrix is constructed from the time aggregated graph of the temporal network. The detailed statistical characteristics of networks are listed in Table~\ref{tab:network statistical characteristics}.
\begin{table}[htbp]
\caption{Statistics Characteristics of Networks} 
\label{tab:network statistical characteristics}
\centering
\begin{threeparttable}
\begin{tabular}{ccccc}
\hline
Network & Nodes & $d_{max}$ & $\left\langle d \right\rangle$ & Duration (days) \\
\hline
ER & 1000 & 6.15 & 16 & / \\
WS & 1000 & 6.00 & 9 & / \\
BA & 1000 & 3.99 & 69 & / \\
SBM & 1000 & 14.20  & 27 & / \\
SGinfectious & 3010 & 9.32 & 50 & 15 \\
Sexual & 3134 & 2.46 & 48 & 90 \\
\hline
\end{tabular}
\begin{tablenotes}
    \item $\left\langle d \right\rangle$ denotes the average degree of the network.
\end{tablenotes}
\end{threeparttable}
\end{table}

\subsubsection{Parameter Configurations}
We use the SI model to simulate infection propagation. The number of sources is drawn uniformly from $\left\{1,2,3\right\}$ and the transmission probability is drawn uniformly from $\left[0.1,0.5\right]$. The propagation terminates once the number of infected individuals reaches $nq$, where $q$ denotes the disease prevalence. The resulting number of infections is denoted by $K$ with $K\leq nq$. The infection states of the population are represented by a $K$-sparse vector $x\in\left\{0,1\right\}^n$, where $x_i=1$ if individual $i$ is infected and $x_i=0$ otherwise. Unless otherwise stated, we adopt the Bernoulli pooling design, where the entries of $M\in\left\{0,1\right\}^{m\times n}$ are independently drawn from $Bernoulli(p)$ with $p=1/(1+nq)$. The test outcomes are obtained according to~(\ref{eq:system_noiseless}), and in the noisy setting, according to~(\ref{eq:system_noisy}) with flip probability $\epsilon=0.02$. The optimization problem in~(\ref{eq:recovery_problem_noiseless_convex}) (or~(\ref{eq:recovery_problem_noisy_convex}) in the noisy setting) is solved by ADMM with $\gamma=0.01$ and $\mu=20$. For each value of $m$, we perform 200 independent trials, with both $x$ and $M$ regenerated in each trial. 

\subsubsection{Evaluation Metrics}
To evaluate the performance of group testing methods, we adopt the following metrics: success rate, recall (also known as the true positive rate), precision and F1 score. The success rate is defined as the fraction of trials in which all infected individuals are correctly identified. Recall, precision, and F1 score are used to provide a comprehensive evaluation of identification performance.

In experiments, we compare our decoders (GraphTV) with several non-adaptive group testing methods. Specifically, we compare with COMP, DD, SCOMP~\cite{aldridge2014group} in the noiseless setting, which are combinatorial group testing methods. COMP eliminates all samples appearing in negative pools and declares the remaining positive. DD further identifies a sample as positive if it is the sole member of a positive pool. SCOMP refines the results of DD by greedily adding candidate samples until all positive tests are explained. In the noisy setting, we compare with NCOMP~\cite{chan2011non}, NDD~\cite{scarlett2020noisy} and Noisy LP~\cite{malioutov2012boolean}. We also consider two graph-based decoders, C-LBP~\cite{nikolopoulos2023community} and Linearized QP~\cite{10002300}, where C-LBP is a loopy belief propagation based decoder that exploits the  community structure, and Linearized QP is a decoder based on Ising model. The parameters of these methods are set according to the configurations reported in the corresponding papers. For the pooling design, we compare the graph-aware pooling design with the Bernoulli and the constant column weight designs. In the constant column weight design, each individual is assigned to $mp$ pools selected uniformly at random.
\begin{figure*}[!t]
\centering
\subfloat[Recovery performance on WS network]{\includegraphics[width=\textwidth]{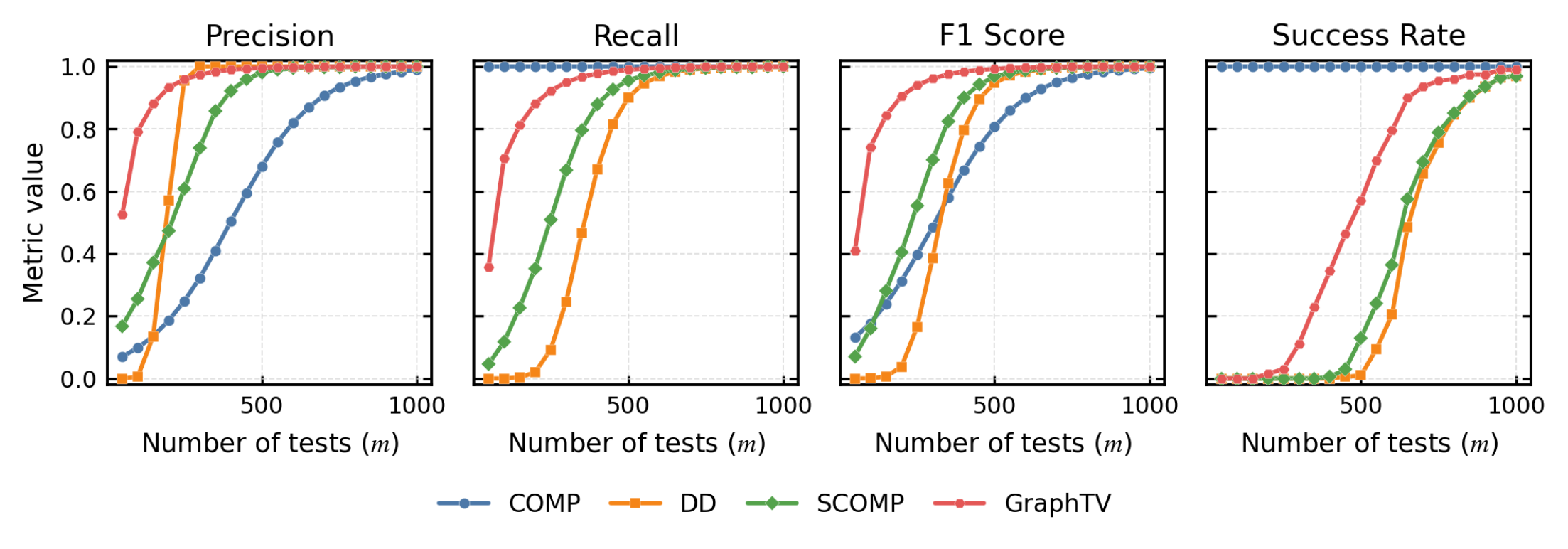}%
}
\quad
\subfloat[Recovery performance on SGinfectious network]{\includegraphics[width=\textwidth]{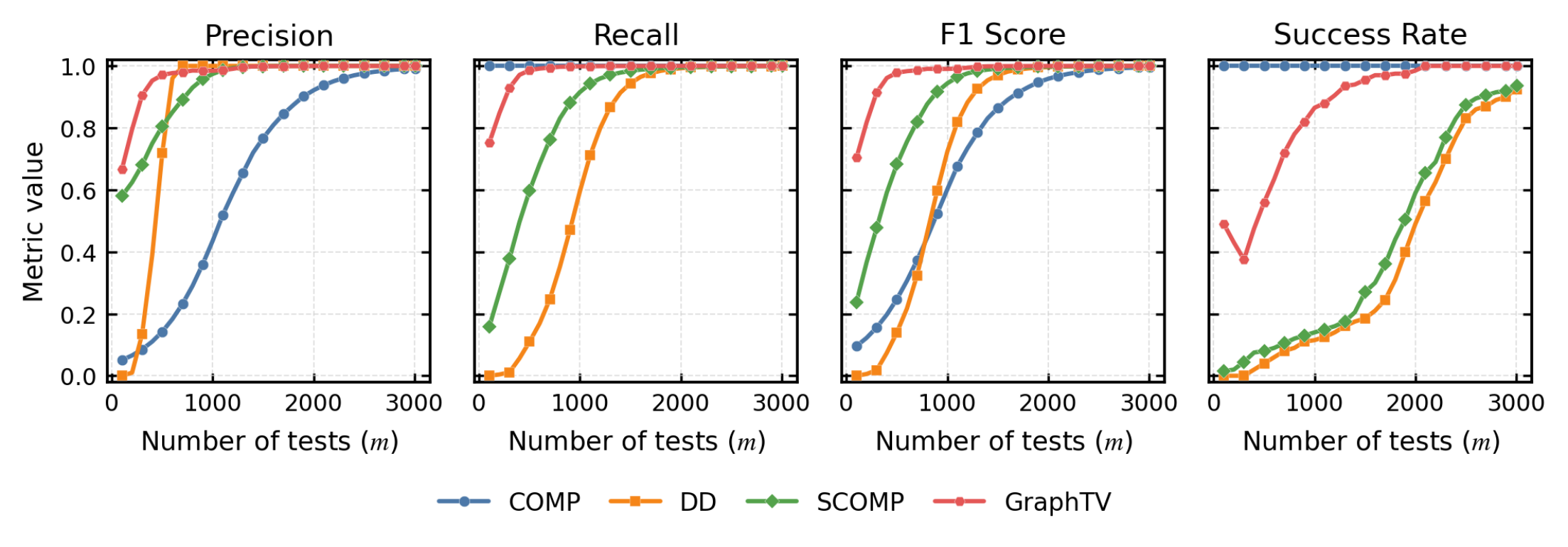}%
}
\caption{Performance comparison of different decoders under the Bernoulli pooling design in the noiseless setting with $q=0.05$.}
\label{fig:noiseless performance comparison}
\end{figure*}

\subsection{Experimental Results}
\subsubsection{Recovery Performance under Bernoulli Pooling Design}
We first evaluate the recovery performance of our decoder in the noiseless setting with $q=0.05$. Fig~\ref{fig:noiseless performance comparison} reports precision, recall, F1 score, and success rate versus the number of tests on the WS and SGinfectious networks. The proposed decoder consistently outperforms DD and SCOMP on both networks. On the WS network, when $m=500$, our decoder improves the success rate by $44\%$ over SCOMP, and also achieves higher recall, precision, and F1 score. Similar improvements over DD are also observed. On the SGinfectious network, when $m=2000$, our decoder improves the success rate, F1 score, and recall over SCOMP by $39.5\%$, $0.21\%$, and $0.44\%$, respectively. Although COMP achieves a success rate and recall of one across all values of $m$, its conservative strategy produces a large number of false positives, resulting in substantially lower precision and F1 score.

We also evaluate the recovery performance in the noisy setting on the ER, WS, SGinfectious, and Sexual contact networks with $q=0.02$. The results are shown in Fig~\ref{fig:noisy performance comparison}. Although the recall of our decoder is slightly lower than that of Noisy LP, it achieves the highest precision and F1 score on all networks. This indicates that our decoder yields fewer false positives and achieves a better balance between precision and recall. In contrast, Noisy LP achieves higher recall at the expense of lower precision, resulting in more false positives. Overall, these results validate the effectiveness of our decoder in both noiseless and noisy settings, and highlight the benefit of exploiting the graph-based prior in decoding.
\begin{figure}[!t]
\centering
\includegraphics[width=\linewidth]{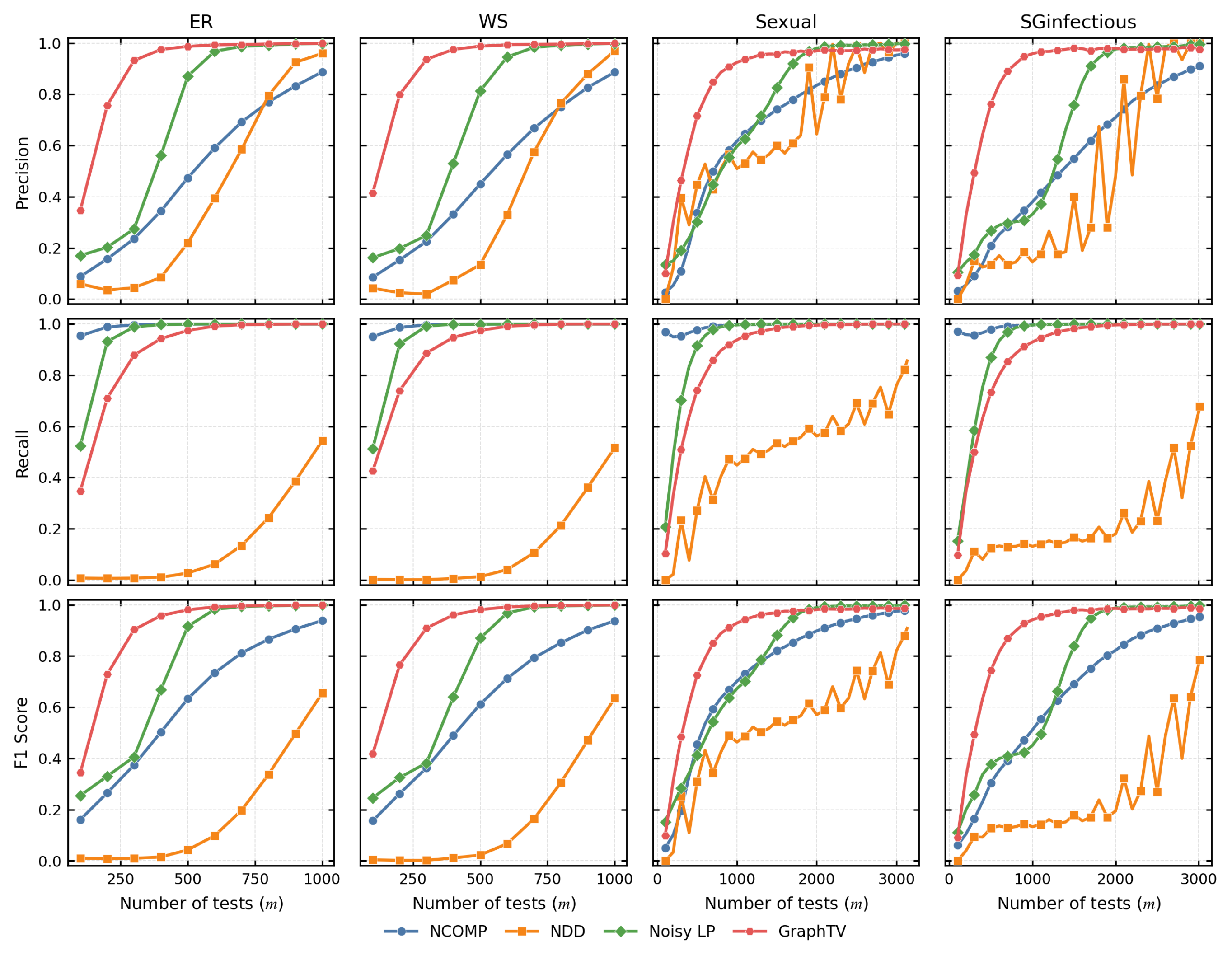}
\caption{Performance comparison of different decoders under the Bernoulli pooling design in the noisy setting with $q=0.02$.}
\label{fig:noisy performance comparison}
\end{figure}

\subsubsection{Comparison of Pooling Designs}
We next compare the proposed graph-aware pooling design with the Bernoulli and the constant column weight designs on synthetic networks. Fig.~\ref{fig:pooling design comparison}\subref{fig:pooling_compare_ER}-\subref{fig:pooling_compare_BA} reports the performance of SCOMP and the proposed decoder under three pooling designs, showing that the performance improvement is primarily due to the proposed pooling design rather than a specific decoding algorithm. The graph-aware design consistently achieves the best performance across all networks, where the improvement is particularly pronounced in success rate, followed by the constant column weight design, while the Bernoulli design performs worst. Under comparable pool sizes and numbers of pools per sample, the main difference among the pooling designs lies in the assignment of samples to pools. Random pooling designs assign samples independently of the graph structure, whereas our design exploits the graph structure and thereby favors assigning neighboring individuals to common pools. Consequently, infection clusters are concentrated within fewer pools, resulting in more negative pools and facilitating the elimination of negative samples.
\begin{figure}[!t]
\centering
\subfloat[Recovery performance on ER network]{\includegraphics[width=0.5\linewidth]{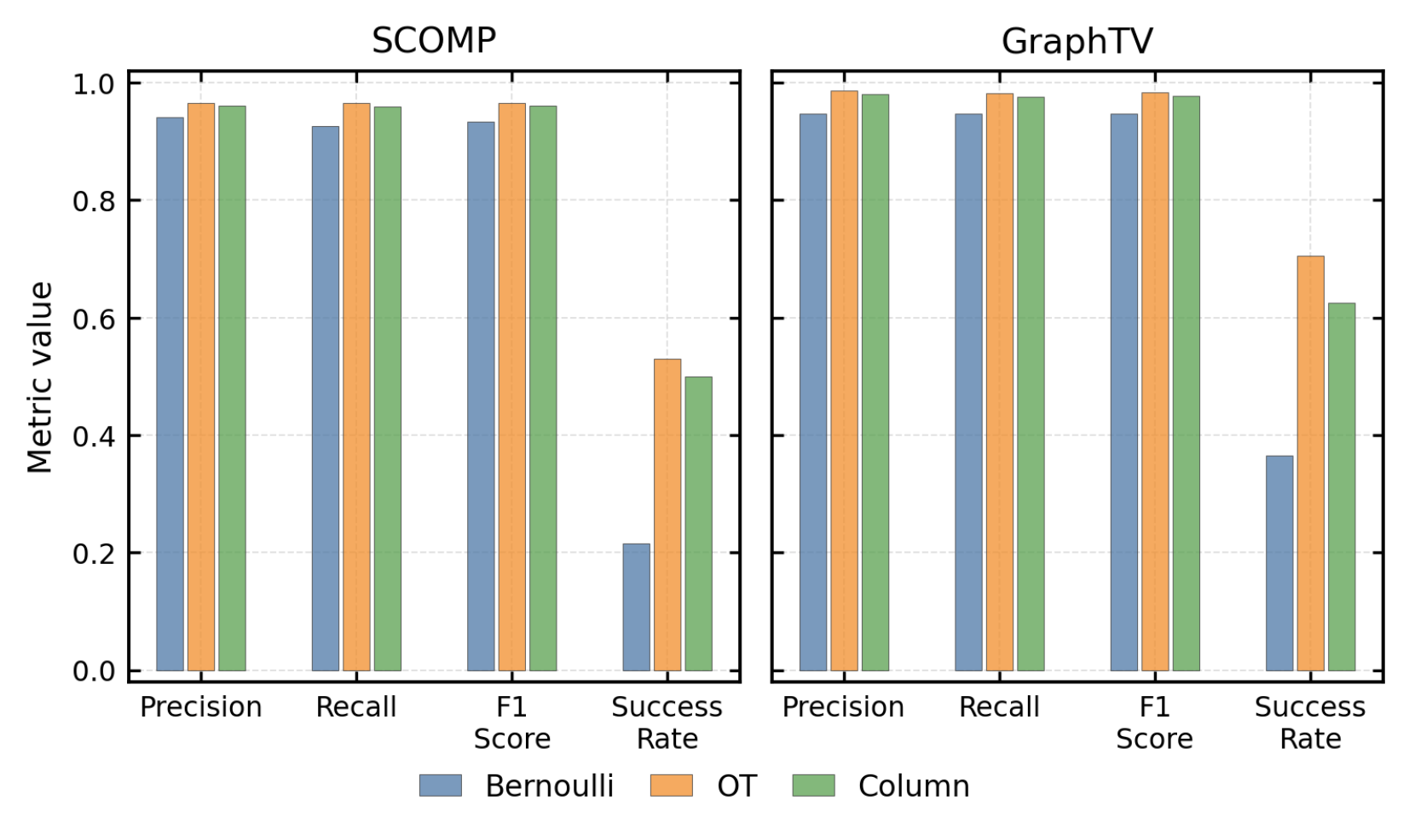}%
\label{fig:pooling_compare_ER}
}
\subfloat[Recovery performance on WS network]{\includegraphics[width=0.5\linewidth]{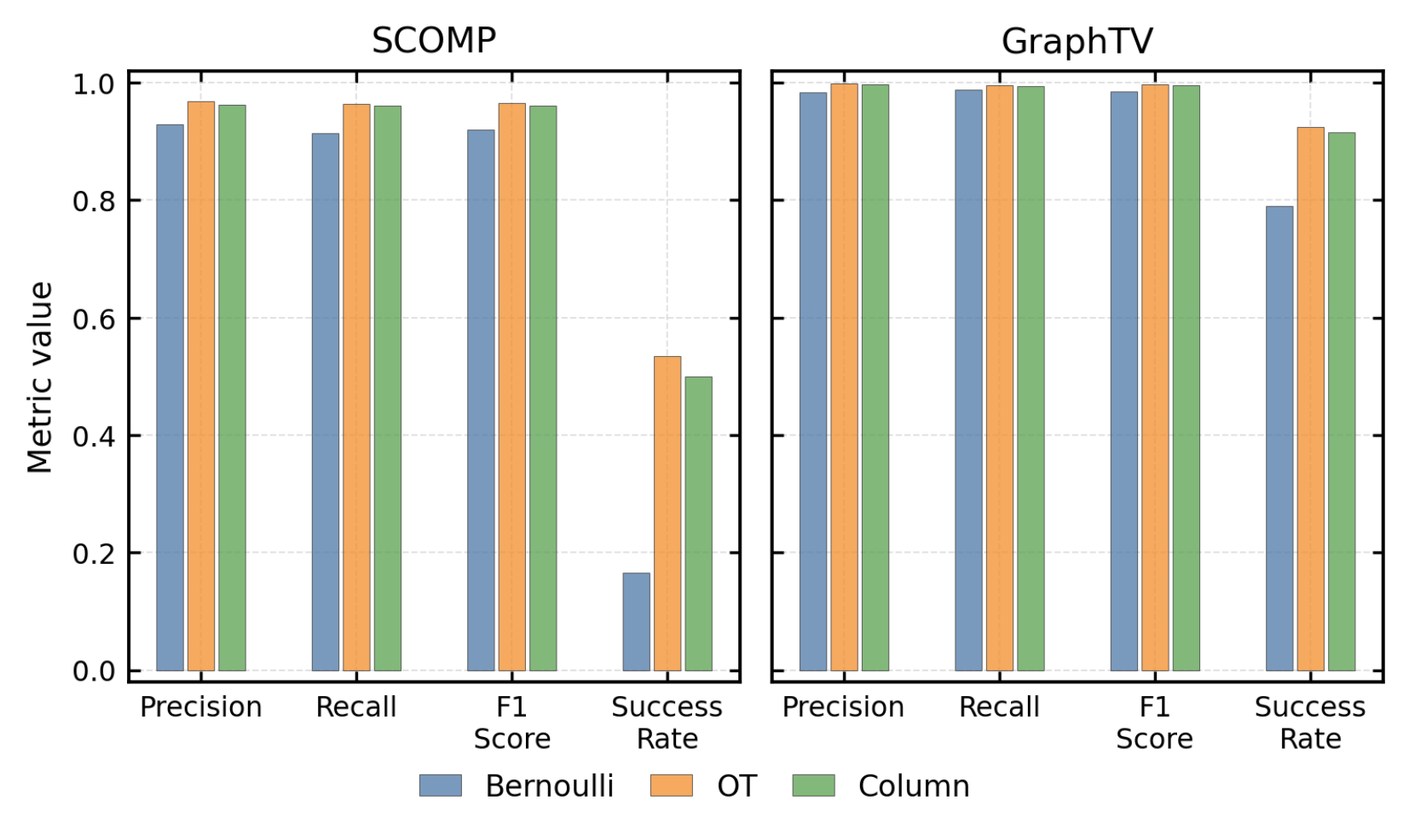}%
\label{fig:pooling_compare_WS}
}
\quad
\subfloat[Recovery performance on BA network]{\includegraphics[width=0.5\linewidth]{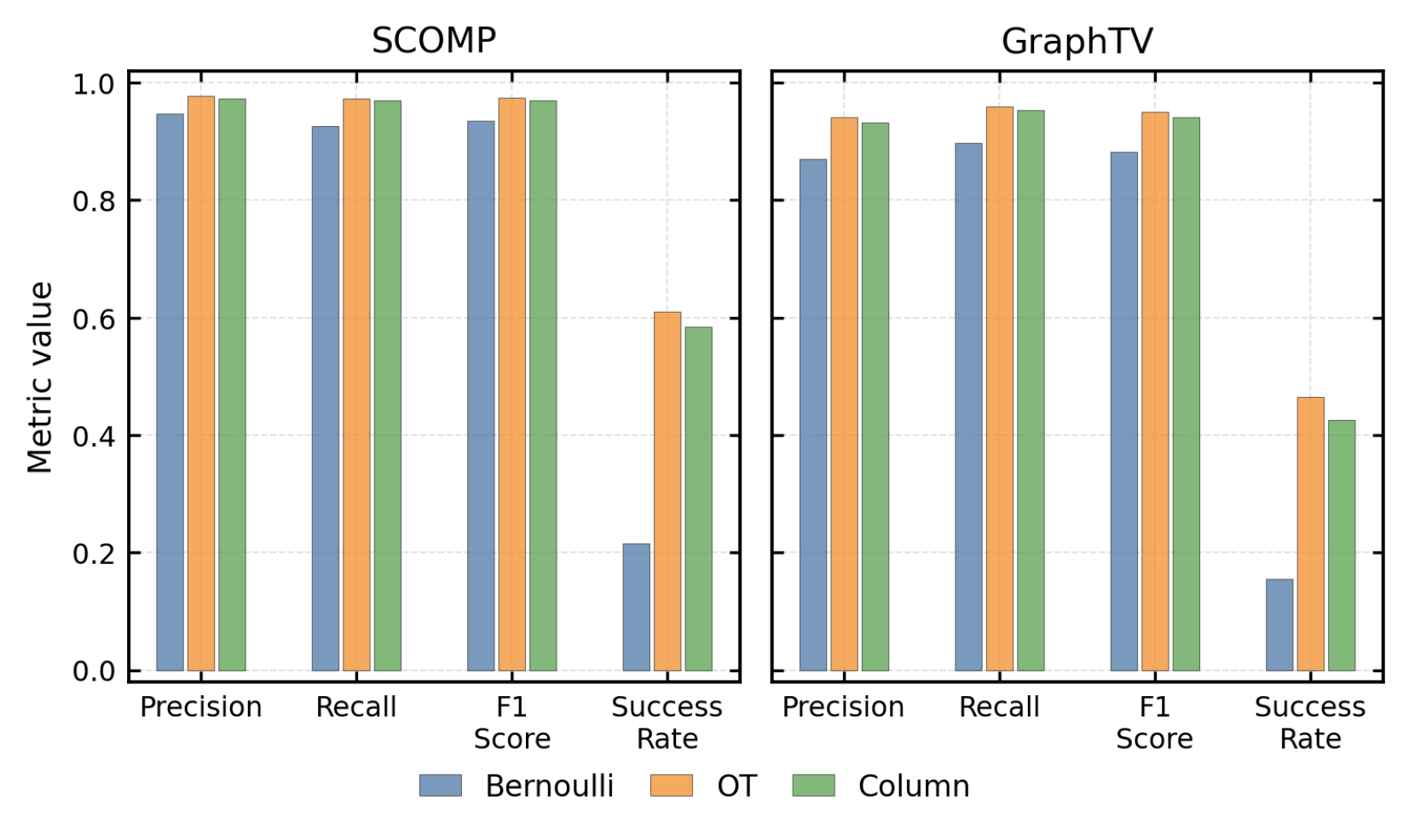}%
\label{fig:pooling_compare_BA}
}
\subfloat[Recovery performance on Sexual contact network]{\includegraphics[width=0.5\linewidth]{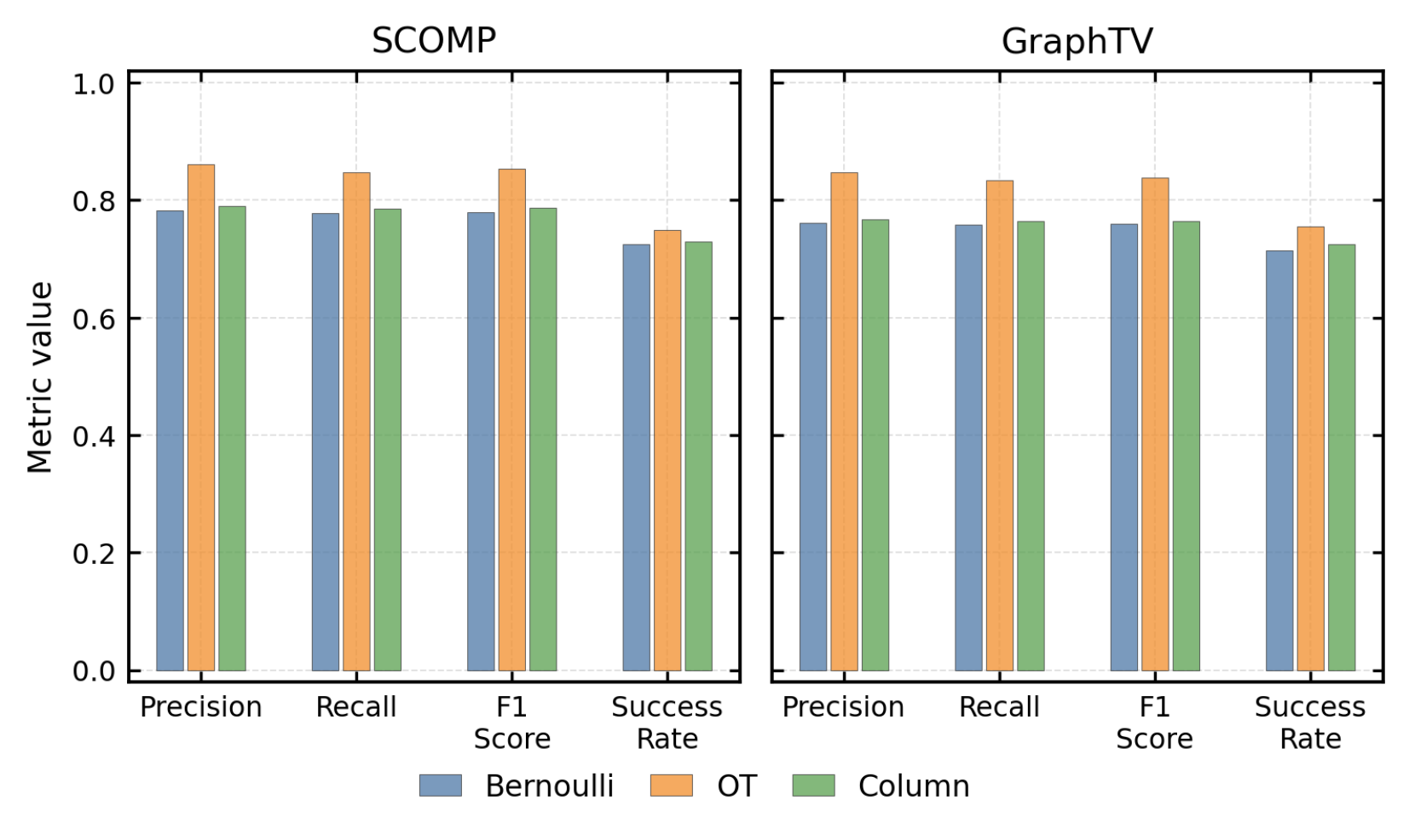}%
\label{fig:pooling_compare_Sexual}
}
\caption{Comparison of pooling designs in the noiseless setting. (a)-(c) Results on synthetic networks with $m=200$ and $q=0.02$. The parameter in the graph-aware pooling design is set to $\varepsilon=10$. (d) Results on the Sexual contact network with $m=500$. The parameter in the graph-aware pooling design is set to $\varepsilon=2$.}
\label{fig:pooling design comparison}
\end{figure}

We further evaluate the pooling designs on the Sexual contact network. The spreading process is initialized from three sources and propagates over the network for $90$ days with a transmission probability of $0.3$. The resulting disease prevalence $q$ is approximately between $0.003$ and $0.007$, and the pooling probability is set to $p=1/(1+nq)$. As shown in Fig.~\ref{fig:pooling design comparison}\subref{fig:pooling_compare_Sexual}, the graph-aware design also achieves better performance than the other pooling designs. This improvement stems from the graph-aware pooling strategy, which leverages graph distances to encourage individuals with correlated infection states to be pooled together, thereby providing more informative measurements.

\subsubsection{Comparison with Graph-based Methods}
Finally, we compare the proposed decoder under Bernoulli, constant column weight, and graph-aware pooling designs with Linearized QP and C-LBP, both implemented under Bernoulli design. Table~\ref{tab:performance of graph-based methods} summarizes the results on the ER, WS, BA, and SBM networks when $m=500$ and $q=0.05$. Under the same Bernoulli pooling design, the proposed decoder achieves comparable performance to Linearized QP and C-LBP. Furthermore, integrating the graph-aware pooling design improves the performance of our decoder. Specifically, on the WS network, although three pooling designs attain comparable precision and recall, GraphTV+OT achieves the highest success rate of $80.5\%$, demonstrating a substantial improvement over GraphTV+Bernoulli, Linearized QP and C-LBP by $19.5\%$, $21\%$, $52.5\%$, respectively. Similar improvements are observed on the other networks. Overall, our method (GraphTV+OT) achieves the best performance across all networks, with particularly significant improvements in success rate, demonstrating the benefit of exploiting the graph-induced correlations in both pooling design and decoding.
\begin{table}[!t]
\caption{Performance of Graph-based Methods\label{tab:performance of graph-based methods}}
\centering
\begin{tabular}{cccccc}
\hline
Network & Method & Precision & Recall & F1 Score & Success Rate \\
\hline
\multirow{5}{*}{\makecell[c]{ER \\ ($\varepsilon=2$)}} & GraphTV+Bernoulli & 0.9847 & 0.9703 & 0.9773 & 0.26 \\
& GraphTV+OT & \textbf{0.9954} & \textbf{0. 9875} & \textbf{0.9914} & \textbf{0.555} \\
& GraphTV+Column & 0.9950 & 0.9875 & 0.9912 & 0.525 \\
& Linearized QP & 0.9870 & 0.9705 & 0.9786 & 0.235 \\
& C-LBP & 0.9800 & 0.9669 & 0.9732 & 0.18 \\
\hline
\multirow{5}{*}{\makecell[c]{WS \\ ($\varepsilon=10$)}} & GraphTV+Bernoulli & 0.9970 & 0.9898 & 0.9933 & 0.61 \\
& GraphTV+OT & \textbf{0.9991} & \textbf{0.9961} & \textbf{0.9976} & \textbf{0.805} \\
& GraphTV+Column & 0.9985 & 0.9953 & 0.9969 & 0.8 \\
& Linearized QP & 0.9976 & 0.9898 & 0.9936 & 0.595 \\
& C-LBP & 0.9876 & 0.9735 & 0.9803 & 0.28 \\
\hline
\multirow{5}{*}{\makecell[c]{BA \\ ($\varepsilon=1$)}} & GraphTV+Bernoulli & 0.9691 & 0.9621 & 0.9655 & 0.14 \\
& GraphTV+OT & \textbf{0.9900} & \textbf{0.9879} & \textbf{0.9889} & \textbf{0.54} \\
& GraphTV+Column & 0.9870 & 0.9843 & 0.9856 & 0.455 \\
& Linearized QP & 0.9706 & 0.9622 & 0.9662 & 0.135 \\
& C-LBP & 0.9806 & 0.9664 & 0.9732 & 0.185 \\
\hline
\multirow{5}{*}{\makecell[c]{SBM \\ ($\varepsilon=1$)}} & GraphTV+Bernoulli & 0.9860 & 0.9668 & 0.9762 & 0.19 \\
& GraphTV+OT & \textbf{0.9967} & \textbf{0.9888} & \textbf{0.9927} & \textbf{0.55} \\
& GraphTV+Column & 0.9951 & 0.9856 & 0.9903 & 0.46 \\
& Linearized QP & 0.9877 & 0.9663 & 0.9768 & 0.185 \\
& C-LBP & 0.9782 & 0.9686 & 0.9732 & 0.18 \\
\hline
\end{tabular}
\end{table}

\section{Conclusion}\label{sec:conclusion}
In this paper, we studied graph-aware group testing that exploits correlations among individuals without requiring additional information beyond the graph structure. Specifically, we developed a graph-aware group testing framework that exploits localized infection clustering in pooling design, fundamental limits of recovery, and decoding. For pooling design, we proposed an optimal transport-based approach that incorporates graph proximity and pooling constraints. We proved that, under mild conditions, the proposed design can eliminate uninfected individuals with higher probability than the Bernoulli pooling design, thereby reducing the candidate set and the decoding complexity. We further investigated the fundamental limits of recovery by characterizing the family of possible infected sets induced by localized infection clustering, which results in a lower bound on the number of tests than that under the combinatorial prior. For decoding, we modeled the infection states of the population as a piecewise-constant graph signal and incorporated a graph total variation regularizer into the sparse recovery formulation. We established sufficient conditions for the proposed decoder under the Bernoulli design. Under mild conditions, the number of tests required for exact recovery is $\mathcal{O}\left(K\log\left(n/K\right)\right)$ in the noiseless case and $\mathcal{O}\left(K\log n\right)$ in the noisy case. Closing the gap in scaling is left for future work. Extensive simulations on synthetic and real-world networks demonstrated the effectiveness of the proposed graph-aware framework, which achieves better recovery performance than the existing methods. Future work will consider more challenging settings, including uncertain and time-varying contact graphs.

{\appendices
\section{Proof of  Theorem~\ref{thm:sufficient condition for noiseless recovery}}\label{proof:sufficient condition for noiseless recovery}
To prove Theorem~\ref{thm:sufficient condition for noiseless recovery}, we need the Chernoff's inequality.
\begin{lemma}[Chernoff's inequality~\cite{chung2006concentration}]\label{lemma:chernoff}
  Let $X_1,\ldots,X_n$ be independent random variables with $\mathbb{P}\left(X_i=1\right)=p_i$ and $\mathbb{P}\left(X_i=0\right)=1-p_i$. Let $X=\sum_{i=1}^{n}{X_i}$. Then, we have
  \begin{equation}
    \begin{aligned}
      & \mathbb{P}\left(X\leq \mathbb{E}\left[X\right]-t\right) \leq \exp\left(-\frac{t^2}{2\mathbb{E}\left[X\right]}\right), \\
      & \mathbb{P}\left(X\geq \mathbb{E}\left[X\right]+t\right) \leq \exp\left(-\frac{t^2}{2\left(\mathbb{E}\left[X\right] + t/3\right)}\right).
    \end{aligned}
  \end{equation}
\end{lemma}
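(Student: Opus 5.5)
We prove both tails with the standard exponential-moment (Cram\'er--Chernoff) method, writing $\mu=\mathbb{E}\left[X\right]=\sum_{i=1}^n p_i$. The argument has three steps: bound the moment generating function, optimize the exponent to obtain an entropy-type bound, and then relax that bound to the stated quadratic forms.

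\textbf{Moment generating function.} For any real $s$, independence gives
\begin{equation*}
\mathbb{E}\left[e^{sX}\right]=\prod_{i=1}^n\left(1+p_i(e^s-1)\right)\leq \exp\left(\mu(e^s-1)\right),
\end{equation*}
using $1+x\leq e^x$.

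\textbf{Upper tail.} For $s>0$, Markov's inequality applied to $e^{sX}$ yields
\begin{equation*}
\mathbb{P}\left(X\geq \mu+t\right)\leq \exp\left(\mu(e^s-1)-s(\mu+t)\right).
\end{equation*}
Choosing $s=\log(1+t/\mu)$ gives the bound $\exp\left(-\mu\,h(t/\mu)\right)$, where $h(u)=(1+u)\log(1+u)-u$. It then suffices to show
\begin{equation*}
h(u)\geq \frac{u^2}{2(1+u/3)}\quad\text{for } u\geq 0,
\end{equation*}
because substituting $u=t/\mu$ gives $\mu\,h(t/\mu)\geq t^2/\left(2(\mu+t/3)\right)$, which is exactly the claimed exponent.

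\textbf{Lower tail.} If $t>\mu$, the event $X\leq \mu-t<0$ is empty, so the bound is trivial. For $0\leq t\leq\mu$, apply Markov's inequality to $e^{-sX}$ with $s>0$:
\begin{equation*}
\mathbb{P}\left(X\leq \mu-t\right)\leq \exp\left(\mu(e^{-s}-1)+s(\mu-t)\right).
\end{equation*}
Choosing $s=-\log(1-t/\mu)$ gives $\exp\left(-\mu\,h(-t/\mu)\right)$, with $h(-u)=(1-u)\log(1-u)+u$. We then need $h(-u)\geq u^2/2$ on $[0,1]$. Set $\phi(u)=h(-u)-u^2/2$. Then $\phi(0)=\phi'(0)=0$ and $\phi''(u)=\frac{1}{1-u}-1\geq 0$, so $\phi\geq 0$ on $[0,1)$. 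The endpoint $u=1$ follows by continuity, since $h(-1)=1\geq 1/2$.

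\textbf{Main obstacle: the upper-tail inequality.} The only nontrivial calculus is the bound on $h(u)$. We rewrite it as
\begin{equation*}
\psi(u)=(1+u)\log(1+u)-u-\frac{3u^2}{6+2u}\geq 0 .
\end{equation*}
One checks $\psi(0)=\psi'(0)=0$. Since $\frac{d}{du}\frac{3u^2}{6+2u}=\frac{3u^2+18u}{(3+u)^2}$, we get
\begin{equation*}
\psi'(u)=\log(1+u)-\frac{3u^2+18u}{(3+u)^2}, \qquad \psi''(u)=\frac{1}{1+u}-\frac{54}{(3+u)^3}.
\end{equation*}
Hence $\psi''(u)\geq 0$ is equivalent to $(3+u)^3\geq 54(1+u)$. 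Expanding, this reads $u^3+9u^2+27u+27\geq 54+54u$, i.e. $u^3+9u^2-27u-27\geq 0$. This polynomial is negative at $u=0$ and at $u=1$, so $\psi''\geq 0$ does \emph{not} hold for all $u\geq 0$, and a second-derivative comparison alone is insufficient.

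I would handle this with the standard Bernstein-type route instead of a single convexity step. Write $\log(1+u)\geq \frac{2u}{2+u}$, which holds for $u\geq 0$ because both sides vanish at $0$ and the derivative of the left side, $\frac{1}{1+u}$, dominates the derivative of the right side, $\frac{4}{(2+u)^2}$, as $(2+u)^2\geq 4(1+u)$. If this crude estimate is not sharp enough to reach the $2(1+u/3)$ denominator, I would fall back on the power-series bound $h(u)=\sum_{k\geq 2}\frac{(-1)^k u^k}{k(k-1)}$ for small $u$, and for large $u$ compare via the variance-based Bernstein moment-generating-function bound
\begin{equation*}
e^{s}-1-s\leq \frac{s^2/2}{1-s/3}\quad\text{for } 0\leq s<3,
\end{equation*}
which follows from $k!\geq 2\cdot 3^{k-2}$. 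That route gives $\exp\left(-t^2/\left(2(\mu+t/3)\right)\right)$ directly, by taking $s=t/(\mu+t/3)$ in the MGF bound, and is the version I expect to go through cleanly.
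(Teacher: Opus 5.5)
The paper does not prove this lemma; it quotes it from Chung and Lu. Your exponential-moment argument is the standard derivation used there, and your final upper-tail step is exactly their device: bound $e^s-1-s\le \frac{s^2/2}{1-s/3}$ using $k!\ge 2\cdot 3^{k-2}$, then take $s=t/(\mu+t/3)$, which requires $\mu>0$ so that $s<3$. So your proof is correct and complete along that route, and the lower tail is handled correctly.

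The ``main obstacle'' you identify is an arithmetic slip, not a genuine failure of convexity. Since $(6+2u)^2=4(3+u)^2$, the correct derivative is $\frac{d}{du}\frac{3u^2}{6+2u}=\frac{3u^2+18u}{2(3+u)^2}$; yours is twice this. With the correct value,
$\psi''(u)=\frac{1}{1+u}-\frac{27}{(3+u)^3}=\frac{u^2(u+9)}{(1+u)(3+u)^3}\ge 0$.
Together with $\psi(0)=\psi'(0)=0$, this proves $h(u)\ge \frac{u^2}{2(1+u/3)}$ directly. You could have caught the error with a sanity check. Your formula gives $\psi''(0)=-1$, but $h''(0)=1$ equals the second derivative of $\frac{3u^2}{6+2u}$ at $0$, so $\psi''(0)=0$. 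A negative value would force $\psi<0$ near $0$, making the (true) inequality false.

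The detour through $\log(1+u)\ge \frac{2u}{2+u}$ only yields $h(u)\ge \frac{u^2}{2+u}$, which is too weak. Drop it, together with the vague power-series patch. Either the corrected convexity argument or the Bernstein moment-generating-function route suffices on its own.
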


To derive sufficient conditions for exact recovery, we reformulate the problem
in~(\ref{eq:recovery_problem_noiseless_linear}) with respect to the support of the binary vector $z$. The candidate set $\mathcal{U}$ is given in~(\ref{eq:candidate_set}). Let $\mathcal{R}=\mathcal{U}\backslash \mathcal{I}$ denote the set of negative samples in $\mathcal{U}$. For any $\mathcal{S}\subseteq\mathcal{V}$, define its edge boundary as
\begin{equation}
  \partial_{\mathcal{E}} \mathcal{S} = \{(u,v)\in \mathcal{E} \mid u\in \mathcal{S}, v\in \mathcal{V} \backslash \mathcal{S}\}.
\end{equation}
By the definition of graph total variation in~(\ref{eq:graph_total_variation}), we have $\left\|Bz\right\|_1 = \left|\partial_{\mathcal{E}} \mathcal{S}\right|$, where $\mathcal{S}=\operatorname{supp}\left(z\right)$. Thus, the problem in~(\ref{eq:recovery_problem_noiseless_linear}) is reformulated as 
\begin{equation}\label{eq:recovery_problem_noiseless_set}
  \begin{gathered}
    \min_{\mathcal{S}\subseteq \mathcal{U}} \gamma\left|\mathcal{S}\right| + \left|\partial_{\mathcal{E}} \mathcal{S}\right| \\
    \mathrm{s.t.} \ \sum_{j\in\mathcal{S}}{M_{i,j}\geq 1},\ i\in\mathcal{P}, \\
  \end{gathered}
\end{equation}
where $\mathcal{P}$ is the set of positive pools.

To prove Theorem~\ref{thm:sufficient condition for noiseless recovery}, we establish the following lemmas.
\begin{lemma}\label{lemma:candidate set properties}
  Consider the system model in~(\ref{eq:system_noiseless}), where $x$ is a $K$-sparse signal with support $\mathcal{I}\in\mathcal{F}_{K,k}$, and $M$ is a Bernoulli random matrix with $p=1/(K+1)$. Define $\lambda = 1+d_{max}/\gamma$ and let $\beta \geq \lambda$. Each of the following statements holds with probability at least $1-\delta/4$:
  \begin{itemize}
    \item [(i)] $\left|\mathcal{U}\right|\leq \left(1+\beta\right)K$, provided that
    \begin{equation}
      m \geq e(K+1)\log\left({4\left(n-K\right)}/{\delta \beta K}\right).
    \end{equation}
    \item [(ii)] $\mathcal{E}\left(\mathcal{I}, \mathcal{R}\right)=\varnothing$, provided that
    \begin{equation}
      m\geq e(K+1)\log\left({4Kd_{max}}/{\delta}\right),
    \end{equation}
    where $\mathcal{E}\left(\mathcal{I}, \mathcal{R}\right)= \{(u,v)\in \mathcal{E} \mid u\in \mathcal{I}, v\in \mathcal{R}\}$.
  \end{itemize}
\end{lemma}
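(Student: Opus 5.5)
Both parts rest on one per-sample elimination probability. Fix $\mathcal{I}$ and consider a negative sample $j\notin\mathcal{I}$. In a given row $i$, sample $j$ is eliminated by that row exactly when $M_{i,j}=1$ and $M_{i,j'}=0$ for all $j'\in\mathcal{I}$. This has probability $p(1-p)^K$. With $p=1/(K+1)$ we have $(1-p)^K=(1-1/(K+1))^K\geq e^{-1}$, so
\begin{equation*}
p(1-p)^K\geq \frac{1}{e(K+1)}.
\end{equation*}
Rows are independent, so
\begin{equation*}
\mathbb{P}(j\in\mathcal{U})=\left(1-p(1-p)^K\right)^m\leq \exp\left(-\frac{m}{e(K+1)}\right).
\end{equation*}
I will use this single bound in both parts.

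For (i), note that $\mathcal{I}\subseteq\mathcal{U}$ always holds in the noiseless model, so $|\mathcal{U}|=K+|\mathcal{R}|$. By linearity,
\begin{equation*}
\mathbb{E}|\mathcal{R}|\leq (n-K)\exp\left(-\frac{m}{e(K+1)}\right).
\end{equation*}
Markov's inequality then gives
\begin{equation*}
\mathbb{P}\left(|\mathcal{R}|>\beta K\right)\leq \frac{(n-K)\exp\left(-m/(e(K+1))\right)}{\beta K}.
\end{equation*}
This is at most $\delta/4$ exactly when $m\geq e(K+1)\log\left(4(n-K)/(\delta\beta K)\right)$, which is the stated condition. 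Strictly $\mathbb{P}(|\mathcal{R}|\geq \beta K)$ is what Markov bounds; the slack between $>$ and $\geq$ is harmless either way. Markov suffices here, so I do not need independence across samples. The events $\{j\in\mathcal{U}\}$ are in fact correlated through the shared columns of $\mathcal{I}$, which is why I avoid Chernoff.

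For (ii), the event $\mathcal{E}(\mathcal{I},\mathcal{R})\neq\varnothing$ means that some negative neighbor $v$ of some infected $u$ survives in $\mathcal{U}$. Let $\mathcal{B}$ be the set of uninfected vertices adjacent to $\mathcal{I}$. Counting edges from $\mathcal{I}$, we get $|\mathcal{B}|\leq Kd_{max}$. A union bound over $v\in\mathcal{B}$, combined with the per-sample bound above, gives
\begin{equation*}
\mathbb{P}\left(\mathcal{E}(\mathcal{I},\mathcal{R})\neq\varnothing\right)\leq Kd_{max}\exp\left(-\frac{m}{e(K+1)}\right).
\end{equation*}
This is at most $\delta/4$ under $m\geq e(K+1)\log(4Kd_{max}/\delta)$.

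I expect no real obstacle. The only delicate points are these:
\begin{itemize}
\item The inequality $(1-1/(K+1))^K\geq 1/e$, which is standard via $\log(1+1/K)\leq 1/K$.
\item The observation that, conditioned on the fixed support $\mathcal{I}$, the probability computation is valid because $M$ is independent of $x$.
\end{itemize}
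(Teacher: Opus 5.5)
Your proposal is correct and follows the paper's proof essentially step for step: the per-sample survival bound $\left(1-p(1-p)^K\right)^m\leq\exp\left(-m/(e(K+1))\right)$, Markov's inequality on $|\mathcal{R}|$ for (i), and a union bound over the at most $Kd_{max}$ uninfected neighbors of $\mathcal{I}$ for (ii). Your extra remarks, namely that $\mathcal{I}\subseteq\mathcal{U}$ so $|\mathcal{U}|=K+|\mathcal{R}|$, and that Markov avoids the dependence among survival events, only make explicit what the paper leaves implicit.
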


\begin{proof}[Proof of Lemma~\ref{lemma:candidate set properties}]
  For a sample $j\notin \mathcal{I}$, the probability that it is in the candidate set is given by
  \begin{equation}
    \mathbb{P}\left(j\in\mathcal{U}\right)=\left(1-p\left(1-p\right)^K\right)^m.
  \end{equation}
  Thus, the number of negative samples in $\mathcal{U}$ satisfies
  \begin{equation}
    \mathbb{E}\left[\left|\mathcal{R}\right|\right] = (n-K)\left(1-p(1-p)^K\right)^m.
  \end{equation} 
  Applying Markov's inequality, yields
  \begin{equation}
    \begin{aligned}
      \mathbb{P}(\left|\mathcal{R}\right|>\beta K) &\leq \frac{(n-K)\left(1-p(1-p)^K\right)^m}{\beta K} \\
      & \leq \frac{(n-K)\exp\left(-m/\left(e(K+1)\right)\right)}{\beta K},
    \end{aligned}
  \end{equation}
  which is upper bounded by $\delta/4$ provided that
  \begin{equation}
    m \geq e(K+1)\log\left({4\left(n-K\right)}/{\delta \beta K}\right).
  \end{equation}
  This proves the part (i).
  
  For the part (ii), there exists edges between $\mathcal{I}$ and $\mathcal{R}$ only if at least one sample in $\mathcal{V}\backslash\mathcal{I}$ adjacent to $\mathcal{I}$ is not eliminated. The probability that this event occurs satisfies
  \begin{equation}
    \mathbb{P}\left(\mathcal{E}\left(\mathcal{I}, \mathcal{R}\right)\neq \varnothing\right) \leq Kd_{max}\left(1-p(1-p)^K\right)^m,
  \end{equation}
  which is upper bounded by $\delta/4$ provided that
  \begin{equation}
    m\geq e(K+1)\log\left({4Kd_{max}}/{\delta}\right).
  \end{equation}
  This completes the proof of Lemma~\ref{lemma:candidate set properties}.
\end{proof}

\begin{lemma}\label{lemma:test coverage}
  Under the same setting as Lemma~\ref{lemma:candidate set properties}, for each non-empty subset $\mathcal{D}\subseteq \mathcal{I}$ with $\left|\mathcal{D}\right|=d$, let $T_{\mathcal{D}}$ denote the number of pools that contain at least one sample from $\mathcal{D}$ and no sample from $\mathcal{I}\backslash\mathcal{D}$. For $d=1,\ldots,K$, define $p_d=\left(1-(1-p)^d\right)(1-p)^{K-d}$. Then, with probability with $1-\delta/4$, $T_{\mathcal{D}}\geq mp_d /2$ holds for every non-empty $\mathcal{D}\subseteq \mathcal{I}$, provided that
  \begin{equation}
    m\geq 8eK\log\left({4K^2}/{\delta}\right).
  \end{equation}
\end{lemma}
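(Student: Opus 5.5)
The plan is a per-subset Chernoff bound followed by a union bound over all nonempty $\mathcal{D}\subseteq\mathcal{I}$, grouping the subsets by size $d$.

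\emph{Step 1: distribution of $T_{\mathcal{D}}$.} Fix $\mathcal{D}$ with $|\mathcal{D}|=d$. Under the Bernoulli design the rows of $M$ are independent. Row $i$ contributes to $T_{\mathcal{D}}$ exactly when it contains at least one member of $\mathcal{D}$ and no member of $\mathcal{I}\backslash\mathcal{D}$. This event has probability $\left(1-(1-p)^d\right)(1-p)^{K-d}=p_d$. Hence $T_{\mathcal{D}}$ is a sum of $m$ i.i.d. Bernoulli$(p_d)$ variables with $\mathbb{E}[T_{\mathcal{D}}]=mp_d$. The lower tail in Lemma~\ref{lemma:chernoff} with $t=mp_d/2$ gives
\begin{equation*}
\mathbb{P}\left(T_{\mathcal{D}}<mp_d/2\right)\leq \exp\left(-mp_d/8\right).
\end{equation*}

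\emph{Step 2: a lower bound on $p_d$ that is linear in $d$.} This linearity is what will beat the $\binom{K}{d}$ subsets of size $d$. I would use $1-(1-p)^d\geq dp(1-p)^{d-1}$, which follows from the mean value theorem or from expanding $1-(1-p)^d=p\sum_{l<d}(1-p)^l$. This gives $p_d\geq dp(1-p)^{K-1}$. With $p=1/(K+1)$,
\begin{equation*}
p(1-p)^{K-1}=\frac{1}{K+1}\left(\frac{K}{K+1}\right)^{K-1}=\frac{1}{K}\left(\frac{K}{K+1}\right)^{K}\geq\frac{1}{eK},
\end{equation*}
using $(1+1/K)^K\leq e$. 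Therefore $p_d\geq d/(eK)$.

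\emph{Step 3: union bound.} Substituting $m\geq 8eK\log(4K^2/\delta)$ gives, for each $\mathcal{D}$ of size $d$,
\begin{equation*}
\exp\left(-mp_d/8\right)\leq\exp\left(-d\log(4K^2/\delta)\right)=\left(\frac{\delta}{4K^2}\right)^d.
\end{equation*}
There are $\binom{K}{d}\leq K^d$ subsets of size $d$, so the total failure probability is at most
\begin{equation*}
\sum_{d=1}^{K}\left(\frac{\delta}{4K}\right)^d.
\end{equation*}
For $K=1$ this sum equals $\delta/4$. For $K\geq2$ it is a geometric series bounded by $\frac{\delta/(4K)}{1-\delta/(4K)}\leq\frac{4}{3}\cdot\frac{\delta}{4K}\leq\frac{\delta}{6}<\frac{\delta}{4}$. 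Hence with probability at least $1-\delta/4$, $T_{\mathcal{D}}\geq mp_d/2$ holds simultaneously for every nonempty $\mathcal{D}\subseteq\mathcal{I}$.

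The main obstacle is Step 2. A crude bound $p_d\geq p_1\approx 1/(eK)$ would only give a per-subset failure probability of about $\delta/(4K^2)$. That cannot absorb the exponentially many subsets, since there are $2^K-1$ of them. The exponent must therefore grow linearly in $d$ to offset $\binom{K}{d}\leq K^d$, and the constant $8e$ together with the choice $p=1/(K+1)$ is exactly what makes the factor $(K/(K+1))^{K}\geq 1/e$ work out.
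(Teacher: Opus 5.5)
Your proposal is correct and follows the paper's proof essentially step for step. Both arguments use the Chernoff lower tail to get $\exp(-mp_d/8)$, the bound $p_d \geq dp(1-p)^{K-1} \geq d/(eK)$, and a union bound with $\binom{K}{d}\leq K^d$ that reduces to $\sum_{d=1}^{K}(\delta/(4K))^d$; your explicit evaluation of this sum (exactly $\delta/4$ for $K=1$, at most $\delta/6$ for $K\geq 2$ using $\delta\leq 1$) just fills in the final step that the paper leaves implicit.
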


\begin{proof}[Proof of Lemma~\ref{lemma:test coverage}]
  Given a non-empty subset $\mathcal{D}\subseteq \mathcal{I}$ with $\left|\mathcal{D}\right|=d$. The probability that a pool contain at least one sample from $\mathcal{D}$ and no sample from $\mathcal{I}\backslash\mathcal{D}$ is 
  \begin{equation}
      p_d = \left(1-(1-p)^d\right)(1-p)^{K-d} \geq dp\left(1-p\right)^{K-1} \geq \frac{d}{eK}.
  \end{equation}
  By the definition of $T_{\mathcal{D}}$, we have $T_{\mathcal{D}}\sim Binomial\left(m,p_d\right)$. Applying Lemma~\ref{lemma:chernoff}, 
  \begin{equation}
    \mathbb{P}\left(T_{\mathcal{D}}<\frac{mp_d}{2}\right)\leq \exp\left(-\frac{mp_d}{8}\right).
  \end{equation}
  Applying the union bound over all subsets $\mathcal{D}\subseteq \mathcal{I}$, yields
  \begin{equation}
    \begin{aligned}
      & \mathbb{P}\left(\text{there exists at least one subset}\ \mathcal{D}\ \text{with}\ T_{\mathcal{D}} <{mp_{d}}/{2}\right) \\
      &\leq \sum_{d=1}^{K}{\binom{K}{d}\exp\left(-\frac{mp_d}{8}\right)} \leq \sum_{d=1}^K \exp\left(d\log K -\frac{md}{8eK}\right),
    \end{aligned}
  \end{equation}
  which is upper bounded by $\delta/4$ by the stated condition on $m$. This completes the proof of Lemma~\ref{lemma:test coverage}.
\end{proof}

\begin{lemma}\label{lemma:candidate eliminate}
  Condition on the high probability events in Lemmas~\ref{lemma:candidate set properties} and~\ref{lemma:test coverage}. With probability at least $1-\delta/4$, for every non-empty $\mathcal{D}\subseteq \mathcal{I}$ and every $\mathcal{A}\subseteq \mathcal{R}$ with $\left|\mathcal{A}\right|\leq \lambda \left|\mathcal{D}\right|$, there exists at least one pool that contains at least one sample from $\mathcal{D}$ and no sample from $\left(\mathcal{I}\backslash\mathcal{D}\right) \cup \mathcal{A}$, provided that
  \begin{equation}
    m\geq 2e^{\beta+1}K\left(\log\left({4K^2}/{\delta}\right) + \lambda \log\left(\beta K+1\right)\right).
  \end{equation} 
\end{lemma}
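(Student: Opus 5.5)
The plan is a union bound over all pairs $(\mathcal{D},\mathcal{A})$, with each pair controlled through the independence of the $m$ rows of $M$. Conditioning on Lemma~\ref{lemma:candidate set properties}(i), we have $|\mathcal{R}|\le\beta K$. We will treat $\mathcal{R}$ as an arbitrary (worst-case) subset of $\mathcal{V}\setminus\mathcal{I}$ of size at most $\beta K$, or, more carefully, union over all such sets.

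For a fixed non-empty $\mathcal{D}\subseteq\mathcal{I}$ with $|\mathcal{D}|=d$ and a fixed $\mathcal{A}$ with $|\mathcal{A}|=a\le\lambda d$, a single Bernoulli row is a ``good'' pool (it hits $\mathcal{D}$ but avoids $(\mathcal{I}\setminus\mathcal{D})\cup\mathcal{A}$) with probability
\begin{equation}
  p_d(1-p)^{a}\ \ge\ \frac{d}{eK}\left(1-\frac{1}{K+1}\right)^{\lambda d}.
\end{equation}
Here we use $p_d\ge d/(eK)$ from the proof of Lemma~\ref{lemma:test coverage}. Since $\lambda d\le\beta K$ whenever this matters, we have $(1-p)^{\lambda d}\ge(1-p)^{\beta K}\ge e^{-\beta}$. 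This gives a per-row success probability of at least $d\,e^{-\beta-1}/K$.

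Rows are independent, so the probability that no row is good is at most $\exp(-md\,e^{-\beta-1}/K)$.

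The alternative, using the conditioned event $T_{\mathcal{D}}\ge mp_d/2$ from Lemma~\ref{lemma:test coverage}, would be to note that each of those $T_{\mathcal{D}}$ pools independently excludes $\mathcal{A}$ with probability $(1-p)^a$. This gives failure probability $(1-e^{-\beta})^{mp_d/2}$. It is cleaner to use the direct row argument, since the constant $2e^{\beta+1}$ in the required $m$ suggests the factor $1/2$ from $T_{\mathcal{D}}\ge mp_d/2$ is in fact being used. So I would follow the second route: a failure probability of at most $\exp(-m d e^{-\beta-1}/(2K))$ per pair.

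The counting step is next. The number of $\mathcal{D}$ of size $d$ is $\binom{K}{d}\le K^d$. The number of $\mathcal{A}\subseteq\mathcal{R}$ with $|\mathcal{A}|\le\lambda d$ is at most $(|\mathcal{R}|+1)^{\lambda d}\le(\beta K+1)^{\lambda d}$. The union bound then gives
\begin{equation}
  \sum_{d=1}^{K}\exp\!\Big(d\log K+\lambda d\log(\beta K+1)-\frac{m d}{2e^{\beta+1}K}\Big).
\end{equation}
With $m\ge 2e^{\beta+1}K(\log(4K^2/\delta)+\lambda\log(\beta K+1))$, each exponent is at most $-d\log(4K/\delta)$. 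Since $\delta\le1$, each term is at most $\delta/(4K)$, and summing over $K$ values of $d$ yields $\delta/4$.

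I expect the main obstacle to be the dependence between $\mathcal{R}$ and $M$. The candidate set $\mathcal{R}$ is determined by the negative pools, so conditioning on it biases the rows. I would handle this by making the union bound over all sets $\mathcal{A}\subseteq\mathcal{V}\setminus\mathcal{I}$ of size at most $\lambda d$ that could lie in a realized $\mathcal{R}$.

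The cleanest fix is to observe that the event ``pool hits $\mathcal{D}$ and avoids $(\mathcal{I}\setminus\mathcal{D})\cup\mathcal{A}$'' is a positive pool. Positive-pool membership of negative samples is independent of the elimination of $\mathcal{A}$, because elimination depends only on negative pools. Conditioned on which rows are positive, the entries of $\mathcal{A}$ in positive rows remain i.i.d.\ Bernoulli$(p)$, so the $(1-p)^a$ factor is justified. The count of $\mathcal{A}$ is then taken within the realized $\mathcal{R}$ using $|\mathcal{R}|\le\beta K$.
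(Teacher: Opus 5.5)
Your proposal is correct and follows the paper's proof. Like the paper, you use the $T_{\mathcal{D}}\ge mp_d/2$ pools from Lemma~\ref{lemma:test coverage}, bound the chance that every one of them hits $\mathcal{A}$ by $(1-e^{-\beta})^{mp_d/2}\le\exp\left(-md/(2e^{\beta+1}K)\right)$, and take the same union bound with at most $K^d(\beta K+1)^{\lambda d}$ pairs for each $d$. Your closing observation is a point the paper leaves implicit, and it is what makes the union bound over $\mathcal{A}\subseteq\mathcal{R}$ legitimate: the positive-row entries of uninfected samples remain i.i.d.\ Bernoulli$(p)$ given $M_{:,\mathcal{I}}$ and the negative-row entries, which are what determine $\mathcal{R}$ and the conditioned events.
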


\begin{proof}[Proof of Lemma~\ref{lemma:candidate eliminate}]
  Given a non-empty subset $\mathcal{D}\subseteq \mathcal{I}$ with $\left|\mathcal{D}\right|=d$, and a subset $\mathcal{A} \subseteq \mathcal{R}$ with $\left|\mathcal{A}\right|=a \leq \lambda d$. By Lemma~\ref{lemma:test coverage}, there are at least $T_{\mathcal{D}}\geq mp_d /2$ pools that contain samples from $\mathcal{D}$ and no sample from $\mathcal{I}\backslash\mathcal{D}$. For any such pool, the probability that it contains no sample from $\mathcal{A}$ is $\left(1-p\right)^a$, which satisfies
  \begin{equation}
    (1-p)^a \geq (1-p)^{\beta K} \geq e^{-\beta},
  \end{equation}
  where $a\leq \left|\mathcal{R}\right|\leq \beta K$. Define $E_{\mathcal{D},\mathcal{A}}$ as the event that each of these $T_{\mathcal{D}}$ pools contains at least one sample from $\mathcal{A}$, which occurs with probability
  \begin{equation}
    \left(1-(1-p)^a\right)^{T_{\mathcal{D}}} \leq \exp\left(-\frac{md}{2e^{\beta+1}K}\right).
  \end{equation}

  We next apply the union bound over all possible pairs $\left(\mathcal{D},\mathcal{A}\right)$. For a fixed $d$, the number of pairs is at most
  \begin{equation}
    \sum_{a=0}^{\min\left\{\lambda d, \left|\mathcal{R}\right|\right\}} {\binom{\left|\mathcal{R}\right|}{a}\binom{K}{d}} \leq K^d \sum_{a=0}^{\lambda d} {\binom{\beta K}{a}} \leq K^d \left(\beta K + 1\right)^{\lambda d}.
  \end{equation}
  Therefore, conditioned on the high probability events in Lemmas~\ref{lemma:candidate set properties} and~\ref{lemma:test coverage}, the probability that there exists a pair $\left(\mathcal{D},\mathcal{A}\right)$ for which $E_{\mathcal{D},\mathcal{A}}$ occurs is upper bounded by
  \begin{equation}
    \sum_{d=1}^{K}{K^d \left(\beta K + 1\right)^{\lambda d} \exp\left(-\frac{md}{2e^{\beta+1}K}\right)} \leq \frac{\delta}{4},
  \end{equation}
  where the last inequality follows from the stated condition on $m$, thus completing the proof of Lemma~\ref{lemma:candidate eliminate}.
\end{proof}

\begin{proof}[Proof of Theorem~\ref{thm:sufficient condition for noiseless recovery}]
  By the union bound, the events established in Lemmas~\ref{lemma:candidate set properties},~\ref{lemma:test coverage} and~\ref{lemma:candidate eliminate} occur simultaneously with probability at least $1-\delta$, under which $\mathcal{I}$ is the unique minimizer of (\ref{eq:recovery_problem_noiseless_set}).
  
  Define the objective function of the problem in~(\ref{eq:recovery_problem_noiseless_set}) as 
  \begin{equation}
    \Phi\left(\mathcal{S}\right)=\gamma\left|\mathcal{S}\right|+\left|\partial_{\mathcal{E}} \mathcal{S}\right|.
  \end{equation}
  Consider any $\mathcal{S}\subseteq\mathcal{U}$ with $\mathcal{S}\neq \mathcal{I}$, which can be expressed as $\mathcal{S}=\left(\mathcal{I} \backslash \mathcal{D}\right)\cup \mathcal{A}$, where $\mathcal{A}\subseteq \mathcal{R}$ and $\mathcal{D}\subseteq \mathcal{I}$. Thus, we have
  $\mathcal{D}=\mathcal{I}\backslash\mathcal{S}$ and $\mathcal{A}=\mathcal{S}\backslash\mathcal{I}$.
  
  In the case of $\mathcal{D}=\varnothing$, $\mathcal{S}=\mathcal{I} \cup \mathcal{A}$ with $\mathcal{A}\neq \varnothing$. By Lemma~\ref{lemma:candidate set properties}, there are no edges between $\mathcal{A}$ and $\mathcal{I}$, which yields 
  \begin{equation}
    \left|\partial_{\mathcal{E}} \mathcal{S}\right| = \left|\partial_{\mathcal{E}} \mathcal{I}\right| + \left|\partial_{\mathcal{E}} \mathcal{A}\right|.
  \end{equation}
  Thus, 
  \begin{equation}
    \Phi\left(\mathcal{S}\right) - \Phi\left(\mathcal{I}\right) = \gamma \left|\mathcal{A}\right| + \left|\partial_{\mathcal{E}} \mathcal{A}\right| > 0,
  \end{equation}
  implying that $\mathcal{S}$ is strictly suboptimal.

  In the case of $\mathcal{D}\neq\varnothing$, we have 
  \begin{equation}
    \left|\partial_{\mathcal{E}} \mathcal{S}\right| = \left|\partial_{\mathcal{E}} \left({\mathcal{I}\backslash \mathcal{D}}\right)\right| + \left|\partial_{\mathcal{E}} \mathcal{A}\right|.
  \end{equation}
  Hence, 
  \begin{equation}\label{eq:objective_diff}
    \begin{aligned}
      \Phi\left(\mathcal{S}\right) - \Phi\left(\mathcal{I}\right) = \left|\mathcal{E}\left(\mathcal{D},\mathcal{I}\backslash\mathcal{D}\right)\right| - \left|\mathcal{E}\left(\mathcal{D},\mathcal{V}\backslash\mathcal{I}\right)\right| 
      + \left|\partial_{\mathcal{E}} \mathcal{A}\right| + \gamma \left(\left|\mathcal{A}\right|-\left|\mathcal{D}\right|\right).
    \end{aligned} 
  \end{equation}
  Since $\left|\mathcal{E}\left(\mathcal{D},\mathcal{V}\backslash\mathcal{I}\right)\right|\leq d_{max}\left|\mathcal{D}\right|$, the term in~(\ref{eq:objective_diff}) is lower bounded by
  \begin{equation}
    \Phi\left(\mathcal{S}\right) - \Phi\left(\mathcal{I}\right) \geq \gamma \left(\left|\mathcal{A}\right|-\left|\mathcal{D}\right|\right) - d_{max}\left|\mathcal{D}\right|.
  \end{equation}

  If $\Phi\left(\mathcal{S}\right) > \Phi\left(\mathcal{I}\right)$, then $\mathcal{S}$ is strictly suboptimal. Otherwise, suppose that $\Phi\left(\mathcal{S}\right) \leq \Phi\left(\mathcal{I}\right)$. Then,
  \begin{equation}
    \left|\mathcal{A}\right| \leq \left(1+\frac{d_{max}}{\gamma}\right) \left|\mathcal{D}\right| = \lambda \left|\mathcal{D}\right|.
  \end{equation}
  By Lemma~\ref{lemma:candidate eliminate}, there exists a pool that contains at least one sample from $\mathcal{D}$ and no sample from $\left(\mathcal{I}\backslash\mathcal{D}\right) \cup \mathcal{A}$. This pool is positive but cannot be explained by $\mathcal{S}$. Consequently, $\mathcal{S}$ violates the constraints in~(\ref{eq:recovery_problem_noiseless_set}). 

  In both cases, every $\mathcal{S}\neq \mathcal{I}$ is either infeasible or strictly suboptimal. Thus, $\mathcal{I}$ is the unique optimal solution to~(\ref{eq:recovery_problem_noiseless_set}), completing the proof of Theorem~\ref{thm:sufficient condition for noiseless recovery}.
\end{proof}

\section{Proof of  Theorem~\ref{thm:sufficient condition for noisy recovery}}\label{proof:sufficient condition for noisy recovery}
To prove Theorem~\ref{thm:sufficient condition for noisy recovery}, we need the Bernstein's inequality.
\begin{lemma}[Bernstein's inequality~\cite{9780199535255.001.0001, beygelzimer2016search}]\label{lemma:bernstein}
  Let $X_1,\ldots,X_n$ be independent zero-mean random variables. Suppose that $\left|X_i\right|\leq B$ almost surely. Then, for all $t>0$, 
  \begin{equation}
    \mathbb{P}\left(\sum_{i=1}^{n}{X_i} \geq t\right) \leq \exp\left(-\frac{t^2/2}{\sum_{i=1}^{n} {\mathbb{E}\left[X_i^2\right]} + Bt/3}\right).
  \end{equation}
\end{lemma}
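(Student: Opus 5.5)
This is Bernstein's inequality for bounded, independent, zero-mean variables. I would prove it by the standard Chernoff/Cramér exponential-moment method, bounding the moment generating function of each summand through its second moment and the almost-sure bound $B$.

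\emph{Chernoff step.} Fix $s>0$. By Markov's inequality applied to $\exp(s\sum_i X_i)$ and independence,
\begin{equation*}
\mathbb{P}\Big(\sum_{i=1}^n X_i\geq t\Big)\leq e^{-st}\prod_{i=1}^n \mathbb{E}\left[e^{sX_i}\right].
\end{equation*}

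\emph{Moment generating function bound.} Expand $e^{sX_i}=1+sX_i+\sum_{k\geq 2} s^kX_i^k/k!$. Since $\mathbb{E}[X_i]=0$, the linear term vanishes. Using $|X_i|\leq B$, we have $\mathbb{E}[X_i^k]\leq B^{k-2}\mathbb{E}[X_i^2]$. Together with $k!\geq 2\cdot 3^{k-2}$, this gives, for $0<s<3/B$,
\begin{equation*}
\mathbb{E}\left[e^{sX_i}\right]\leq 1+\frac{s^2\mathbb{E}[X_i^2]}{2}\sum_{k\geq 2}\Big(\frac{sB}{3}\Big)^{k-2}=1+\frac{s^2\mathbb{E}[X_i^2]}{2(1-sB/3)}.
\end{equation*}
Applying $1+u\leq e^u$ and writing $\sigma^2=\sum_{i=1}^n\mathbb{E}[X_i^2]$, the tail is bounded by
\begin{equation*}
\exp\Big(-st+\frac{s^2\sigma^2}{2(1-sB/3)}\Big).
\end{equation*}

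\emph{Choice of $s$.} Rather than optimizing exactly, take $s=t/(\sigma^2+Bt/3)$. This satisfies $sB/3<1$. It also gives
\begin{equation*}
1-\frac{sB}{3}=\frac{\sigma^2}{\sigma^2+Bt/3},
\end{equation*}
so the quadratic term equals
\begin{equation*}
\frac{s^2\sigma^2}{2(1-sB/3)}=\frac{t^2}{2(\sigma^2+Bt/3)},
\end{equation*}
while $st=t^2/(\sigma^2+Bt/3)$. The exponent is therefore $-t^2/\big(2(\sigma^2+Bt/3)\big)$, which is exactly the claimed bound.

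The only delicate step is the moment generating function estimate. The factorial inequality $k!\geq 2\cdot 3^{k-2}$ must be checked (by induction, since $k!/(k-1)!=k\geq 3$ for $k\geq3$) so that the series sums to the geometric form whose denominator produces the $Bt/3$ term. The degenerate case $\sigma^2=0$ means every $X_i$ is $0$ almost surely, so the probability is $0$ and the bound holds trivially.
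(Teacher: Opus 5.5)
Your proof is correct: the Cram\'er--Chernoff bound, combined with $\mathbb{E}[e^{sX_i}]\le 1+\frac{s^2\mathbb{E}[X_i^2]}{2(1-sB/3)}$ (obtained from $k!\ge 2\cdot 3^{k-2}$) and the explicit choice $s=t/(\sigma^2+Bt/3)$, gives exactly the stated exponent, and you handle the degenerate case $\sigma^2=0$ correctly. The paper does not prove this lemma; it cites the monograph of Boucheron, Lugosi and Massart, whose proof uses this same moment-generating-function argument, so your route is the standard one (your direct choice of $s$ in place of exact optimization followed by relaxation is a harmless difference).
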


Similar to the noiseless setting, we first reformulate the problem in~(\ref{eq:recovery_problem_noisy}) with respect to the support of the binary vector $z$. For any $\mathcal{S}\subseteq\mathcal{V}$, define 
\begin{equation}
  \phi_i\left(\mathcal{S}\right)=\mathbb{I}\left(\sum_{j\in\mathcal{S}}{M_{i,j}}\geq 1\right),\ h_i(\mathcal{S})=\mathbb{P}\left(\phi_i\left(\mathcal{S}\right)\neq\phi_i\left(\mathcal{I}\right)\right),\ i=1,\ldots,m,
\end{equation}
and
\begin{equation}
  \begin{aligned}
    H\left(\mathcal{S}\right) = \frac{1}{m}\sum_{i=1}^{m}{\mathbb{I}\left(\tilde{y}_i \neq \phi_i\left(\mathcal{S}\right)\right)}.
  \end{aligned}
\end{equation}
Since $h_i(\mathcal{S})$ is independent of $i$, denote the common value by $h\left(\mathcal{S}\right)$. For any $z$ with $\mathcal{S}=\operatorname{supp}\left(z\right)$, we have $d_H\left(\tilde{y},\hat{y}\left(z\right)\right)=H\left(\mathcal{S}\right)$.
Thus, the problem in~(\ref{eq:recovery_problem_noisy}) is reformulated as
\begin{equation}\label{eq:recovery_problem_noisy_set}
  \min_{\mathcal{S}\subseteq \mathcal{U_{\epsilon}}} \gamma\left|\mathcal{S}\right| + \left|\partial_{\mathcal{E}} \mathcal{S}\right| + \mu H\left(\mathcal{S}\right).
\end{equation}

Theorem~\ref{thm:sufficient condition for noisy recovery} follows from the following two lemmas.

\begin{lemma}\label{lemma:candidate set properties noisy}
  Consider the system model in~(\ref{eq:system_noisy}) with flip probability $\epsilon\in\left(0,1/2\right)$, where $x$ is a $K$-sparse signal with support $\mathcal{I}\in \mathcal{F}_{K,k}$, and $M$ is a Bernoulli random matrix with $p=1/\left(1+K\right)$. Define $\lambda = 1+d_{max}/\gamma$. Let $\beta\geq \lambda$ and $\mathcal{R}_{\epsilon}=\mathcal{U_{\epsilon}}\backslash\mathcal{I}$. Each of the following statements holds with probability at least $1-\delta/3$:
  \begin{itemize}
    \item [(i)] $\left|\mathcal{U}_{\epsilon}\right|\leq \left(1+\beta\right)K$, provided that
    \begin{equation}
      m \geq \frac{8e\left(1+(e-2)\epsilon\right)(K+1)}{(1-2\epsilon)^2} \log\left({3(n-K)}/{\delta \beta K}\right).
    \end{equation}
    \item [(ii)] $\mathcal{I}\subseteq \mathcal{U}_{\epsilon}$, provided that
    \begin{equation}
      m\geq \frac{4e\left(1+(6e-2)\epsilon\right)(K+1)}{3(1-2\epsilon)^2} \log\left({3K}/{\delta}\right).
    \end{equation}
  \end{itemize}
\end{lemma}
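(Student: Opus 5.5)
The plan is to mirror the proof of Lemma~\ref{lemma:candidate set properties}, with the deterministic elimination rule replaced by the threshold rule $\tilde N_j\ge\tau$. The threshold is placed midway between the expected counts of observed negative pools for a negative and for an infected sample, and the Chernoff bounds of Lemma~\ref{lemma:chernoff} then give concentration around each mean.

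\textbf{Per-pool probabilities and choice of $\tau$.} Fix a sample $j$. Across pools $i$, the variables $M_{i,j}\mathbb{I}(\tilde y_i=0)$ are i.i.d. Bernoulli, because the rows of $M$ and the flips $e_i$ are independent. Hence $\tilde N_j\sim Binomial(m,a)$, where $a$ depends only on whether $j\in\mathcal{I}$.
\begin{itemize}
\item If $j\notin\mathcal{I}$, pool $i$ is observed negative in two ways: it is truly negative (probability $(1-p)^K$) and not flipped, or it is truly positive and flipped. This gives
\begin{equation}
a_0=p\left(\epsilon+(1-2\epsilon)(1-p)^K\right).
\end{equation}
\item If $j\in\mathcal{I}$, a pool containing $j$ is truly positive and is observed negative only when flipped, so $a_1=p\epsilon$.
\end{itemize}
The gap is $a_0-a_1=p(1-2\epsilon)(1-p)^K\ge (1-2\epsilon)/(e(K+1))$, using $p=1/(K+1)$ and $(1-p)^K\ge e^{-1}$. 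I will take $\tau=m(a_0+a_1)/2$, so that $t:=m(a_0-a_1)/2$ is the deviation needed in either direction.

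\textbf{Part (i).} For $j\notin\mathcal{I}$, the lower-tail Chernoff bound gives
\begin{equation}
\mathbb{P}(j\in\mathcal{U}_\epsilon)=\mathbb{P}(\tilde N_j<\tau)\le \exp\left(-\frac{t^2}{2ma_0}\right)=\exp\left(-\frac{m p^2(1-2\epsilon)^2(1-p)^{2K}}{8a_0}\right).
\end{equation}
I then bound the ratio $(1-p)^{2K}/\left(\epsilon+(1-2\epsilon)(1-p)^K\right)$ from below, using $e^{-1}\le(1-p)^K\le 1$. This should turn the exponent into $-m(1-2\epsilon)^2/\left(8e(1+(e-2)\epsilon)(K+1)\right)$, which is the constant in the statement.

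Next, $\mathbb{E}|\mathcal{R}_\epsilon|\le (n-K)$ times this bound. Markov's inequality gives $\mathbb{P}(|\mathcal{R}_\epsilon|>\beta K)\le \delta/3$ under the stated condition on $m$, and therefore $|\mathcal{U}_\epsilon|\le|\mathcal{I}|+|\mathcal{R}_\epsilon|\le(1+\beta)K$.

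\textbf{Part (ii).} For $j\in\mathcal{I}$, the upper-tail Chernoff bound (equivalently Lemma~\ref{lemma:bernstein}) gives
\begin{equation}
\mathbb{P}(\tilde N_j\ge\tau)\le\exp\left(-\frac{t^2}{2(ma_1+t/3)}\right).
\end{equation}
Substituting $a_1=p\epsilon$ and $t\ge m(1-2\epsilon)/(2e(K+1))$ makes the denominator proportional to $m p\,(\epsilon+(1-2\epsilon)/(6e))$. After simplification the exponent should become $-3m(1-2\epsilon)^2/\left(4e(1+(6e-2)\epsilon)(K+1)\right)$. A union bound over the $K$ infected samples then yields $\mathcal{I}\subseteq\mathcal{U}_\epsilon$ with probability at least $1-\delta/3$ under the stated condition on $m$.

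\textbf{Main obstacle.} The structure of the argument is routine. The main difficulty is the constant bookkeeping: choosing $\tau$ and the elementary inequalities on $(1-p)^K$ so that the exponents match exactly the factors $(1+(e-2)\epsilon)$ and $(1+(6e-2)\epsilon)$ in the statement. If the midpoint $\tau$ does not reproduce these constants, I would instead take $\tau=m\left(a_1+(1-2\epsilon)/(2e(K+1))\right)$, written directly in terms of the lower bound on the gap, and redo the two tail estimates with that choice.
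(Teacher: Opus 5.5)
Your proposal is correct and follows the paper's proof: a Chernoff lower tail plus Markov's inequality for the negative samples, and a Chernoff upper tail plus a union bound over the $K$ infected samples. The paper uses exactly your fallback threshold, $\tau = m(\alpha(\epsilon)+\epsilon)/(2(K+1))$ with $\alpha(\epsilon)=(1+(e-2)\epsilon)/e$. Your primary midpoint choice $\tau=m(a_0+a_1)/2$ also reproduces both constants exactly, because both exponents are increasing in $(1-p)^K$ and hence minimized at $(1-p)^K=e^{-1}$.
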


\begin{proof}[Proof of Lemma~\ref{lemma:candidate set properties noisy}]
  Recall that the candidate set is given by 
  \begin{equation}
    \mathcal{U}_{\epsilon}=\left\{j\in\mathcal{V}\mid \tilde{N}_j <\tau \right\},
  \end{equation}
  where $\tilde{N}_j$ is the number of observed negative pools containing sample $j$ and $\tau$ is a threshold. For a sample $j\notin \mathcal{I}$, define $q_{\epsilon}$ as the probability that a pool containing the sample $j$ is observed as negative, which satisfies
  \begin{equation}
    q_\epsilon=p[(1-p)^K(1-\epsilon)+(1-(1-p)^K)\epsilon] \geq \frac{\alpha(\epsilon)}{K+1},
  \end{equation}
  where $\alpha\left(\epsilon\right)=\left({1+(e-2)\epsilon}\right)/{e}$. Hence, we have $\tilde{N}_j\sim Binomial\left(m,q_{\epsilon}\right)$ for each sample $j\notin \mathcal{I}$. The threshold is set to 
  \begin{equation}
    \tau=\frac{m\left(\alpha\left(\epsilon\right)+\epsilon\right)}{2(K+1)}.
  \end{equation}
  Applying Lemma~\ref{lemma:chernoff}, yields
  \begin{equation}
    \mathbb{P}\left(\tilde{N}_j < \tau\right) \leq \exp\left(-\frac{m\left(\alpha(\epsilon)-\epsilon\right)^2}{8\alpha(\epsilon)(K+1)}\right).
  \end{equation}
  Let $\mathcal{R}_{\epsilon}=\mathcal{U}_{\epsilon}\backslash \mathcal{I}$ be the set of negative samples in $\mathcal{U}_{\epsilon}$. Then,
  \begin{equation}
    \mathbb{E}\left[\left|\mathcal{R}_{\epsilon}\right|\right] \leq \left(n-K\right)\exp\left(-\frac{m\left(\alpha(\epsilon)-\epsilon\right)^2}{8\alpha(\epsilon)(K+1)}\right).
  \end{equation}
  By Markov's inequality, we have
  \begin{equation}
    \mathbb{P}\left(\left|\mathcal{R}_{\epsilon}\right|>\beta K\right)\leq \frac{\mathbb{E}\left[\left|\mathcal{R}_{\epsilon}\right|\right]}{\beta K},
  \end{equation} 
  which is bounded by $\delta/3$ provided that
  \begin{equation}
    m \geq \frac{8e\left(1+(e-2)\epsilon\right)(K+1)}{(1-2\epsilon)^2} \log\left({3(n-K)}/{\delta \beta K}\right).
  \end{equation}
  This proves the part (i).

  We next prove the part (ii). For any sample $j\in\mathcal{I}$, a pool containing $j$ is observed as negative only if the test outcome is flipped. Hence, we have $\tilde{N}_j\sim Binomial\left(m,p\epsilon\right)$ for each sample $j\in \mathcal{I}$. Applying Lemma~\ref{lemma:chernoff}, yields
  \begin{equation}
    \mathbb{P}\left(\tilde{N}_j \geq \tau\right) \leq \exp\left(-\frac{3m\left(\alpha(\epsilon)-\epsilon\right)^2}{4\left(\alpha(\epsilon) + 5\epsilon\right)(K+1)}\right).
  \end{equation}
  By the union bound over all $j\in\mathcal{I}$, yields
  \begin{equation}
    \mathbb{P}\left(\mathcal{I}\not\subseteq\mathcal{U}_{\epsilon}\right) \leq K \exp\left(-\frac{3m\left(\alpha(\epsilon)-\epsilon\right)^2}{4\left(\alpha(\epsilon) + 5\epsilon\right)(K+1)}\right),
  \end{equation}
  which is at most $\delta/3$ provided that
  \begin{equation}
    m\geq \frac{4e\left(1+(6e-2)\epsilon\right)(K+1)}{3(1-2\epsilon)^2} \log\left({3K}/{\delta}\right).
  \end{equation}
  This completes the proof of Lemma~\ref{lemma:candidate set properties noisy}.
\end{proof}

\begin{lemma}\label{lemma:hamming distance}
  Under the same setting as Lemma~\ref{lemma:candidate set properties noisy}, consider any $\mathcal{S}=\left(\mathcal{I} \backslash \mathcal{D}\right)\cup \mathcal{A}$ with $\mathcal{S}\neq\mathcal{I}$, where $\mathcal{D}\subseteq\mathcal{I}$ and $\mathcal{A}\subseteq\mathcal{V}\backslash\mathcal{I}$ with $\left|\mathcal{A}\right|\leq \beta K$. Then, with probability at least $1-\delta/3$, 
  \begin{equation}
    H\left(\mathcal{S}\right)- H\left(\mathcal{I}\right)>\frac{{(1-2\epsilon)}h(\mathcal{S})}{2}
  \end{equation}
  holds for all such $\mathcal{S}$, provided that
  \begin{equation}
    m\geq \frac{16e^{\beta+1}(2-\epsilon)K}{3(1-2\epsilon)^2}\log\left({n}/{\log\left(1+\delta/3\right)}\right).
  \end{equation}
\end{lemma}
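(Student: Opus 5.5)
We write $H(\mathcal{S})-H(\mathcal{I})$ as an average of independent per-pool terms, compute its mean, and apply Bernstein's inequality (Lemma~\ref{lemma:bernstein}) followed by a union bound over all admissible $\mathcal{S}$.

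Fix $\mathcal{S}=(\mathcal{I}\backslash\mathcal{D})\cup\mathcal{A}$ with $\mathcal{S}\neq\mathcal{I}$. For each pool $i$ set
\begin{equation*}
X_i=\mathbb{I}\left(\tilde{y}_i\neq\phi_i(\mathcal{S})\right)-\mathbb{I}\left(\tilde{y}_i\neq\phi_i(\mathcal{I})\right),
\end{equation*}
so that $H(\mathcal{S})-H(\mathcal{I})=\frac{1}{m}\sum_i X_i$. Because $y_i=\phi_i(\mathcal{I})$, we have $X_i=0$ when $\phi_i(\mathcal{S})=\phi_i(\mathcal{I})$. Otherwise $X_i=+1$ if the outcome was not flipped and $X_i=-1$ if it was. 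The rows of $M$ and the flips are independent across $i$, so the $X_i$ are i.i.d. with
\begin{equation*}
\mathbb{E}[X_i]=(1-2\epsilon)h(\mathcal{S}),\qquad \mathbb{E}[X_i^2]=h(\mathcal{S}).
\end{equation*}
Apply Lemma~\ref{lemma:bernstein} to the centered variables $\mathbb{E}[X_i]-X_i$, which are bounded by $B=2$, with deviation $t=m(1-2\epsilon)h(\mathcal{S})/2$. This gives
\begin{equation*}
\mathbb{P}\left(H(\mathcal{S})-H(\mathcal{I})\leq \tfrac{(1-2\epsilon)h(\mathcal{S})}{2}\right)\leq \exp\left(-\frac{t^2/2}{mh(\mathcal{S})+2t/3}\right)=\exp\left(-\frac{3m(1-2\epsilon)^2h(\mathcal{S})}{8(3+2(1-2\epsilon))}\right).
\end{equation*}
The constant $(2-\epsilon)$ in the statement should come from simplifying this denominator, possibly by bounding the variance by $\mathbb{E}[X_i^2]-(\mathbb{E}X_i)^2\leq h(\mathcal{S})$ and absorbing terms.

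The second step is a lower bound on $h(\mathcal{S})$ in terms of $s=|\mathcal{D}|+|\mathcal{A}|\geq1$. If $\mathcal{D}\neq\varnothing$, a pool that hits $\mathcal{D}$ but misses $(\mathcal{I}\backslash\mathcal{D})\cup\mathcal{A}$ makes $\phi_i$ differ. The same holds if $\mathcal{A}\neq\varnothing$ and the pool hits $\mathcal{A}$ but misses $\mathcal{I}$. As in Lemma~\ref{lemma:test coverage}, this yields roughly
\begin{equation*}
h(\mathcal{S})\geq p\,(1-p)^{K+\beta K}\max\{|\mathcal{D}|,|\mathcal{A}|\}\gtrsim \frac{s}{2e^{\beta+1}K},
\end{equation*}
using $(1-p)^{K}\geq e^{-1}$ and $(1-p)^{\beta K}\geq e^{-\beta}$. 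Substituting, the failure probability for a fixed $\mathcal{S}$ is at most $\exp(-c\,ms/K)$ with $c\approx 3(1-2\epsilon)^2/(16e^{\beta+1}(2-\epsilon))$.

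The third step is the union bound. The number of pairs $(\mathcal{D},\mathcal{A})$ with $|\mathcal{D}|+|\mathcal{A}|=s$ is at most $\binom{n}{s}\leq n^s$. The total failure probability is therefore at most
\begin{equation*}
\sum_{s\geq1}\left(n\,e^{-cm/K}\right)^s\leq \left(1+n\,e^{-cm/K}\right)^n-1,
\end{equation*}
which explains the $\log(1+\delta/3)$ in the statement. Requiring $n\,e^{-cm/K}\leq \log(1+\delta/3)/n$, or more crudely forcing the bound below $\delta/3$, gives the stated $m\geq \frac{16e^{\beta+1}(2-\epsilon)K}{3(1-2\epsilon)^2}\log(n/\log(1+\delta/3))$.

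I expect the main obstacle to be the lower bound on $h(\mathcal{S})$ that must be linear in $s$ and uniform over all $(\mathcal{D},\mathcal{A})$. This matters because $\mathcal{A}$ ranges over arbitrary subsets of $\mathcal{V}\backslash\mathcal{I}$, not just over $\mathcal{U}_\epsilon$, so only $|\mathcal{A}|\leq\beta K$ controls the factor $(1-p)^{|\mathcal{A}|}$. I would first handle the cases $\mathcal{D}=\varnothing$ and $\mathcal{A}=\varnothing$ separately and then combine them via $\max\{|\mathcal{D}|,|\mathcal{A}|\}\geq s/2$. The remaining work is matching the exact constants, which is bookkeeping.
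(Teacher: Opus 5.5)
Your route is the paper's: Bernstein applied to the centered per-pool differences $W_i$, a lower bound on $h(\mathcal{S})$ that is linear in $r=|\mathcal{D}|+|\mathcal{A}|$, and a union bound summed with the binomial theorem. However, three steps as you wrote them each lose a constant factor. So your argument does not reach the stated threshold on $m$, and the value of $c$ you plug in at the end does not follow from your own displayed bounds.
\begin{enumerate}
\item \textbf{Bernstein arithmetic.} With $t=m(1-2\epsilon)h(\mathcal{S})/2$ and $B=2$, you get $Bt/3=m(1-2\epsilon)h(\mathcal{S})/3$. With variance sum at most $mh(\mathcal{S})$, the exponent is exactly
\[
\frac{t^2/2}{mh(\mathcal{S})+2t/3}=\frac{3m(1-2\epsilon)^2h(\mathcal{S})}{8(4-2\epsilon)}=\frac{3m(1-2\epsilon)^2h(\mathcal{S})}{16(2-\epsilon)}.
\]
Your denominator $8(3+2(1-2\epsilon))$ comes from dropping the factor $1/2$ in $t$. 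It does not simplify to anything involving $(2-\epsilon)$, and no ``absorbing'' is needed once the arithmetic is fixed.
\item \textbf{Lower bound on $h(\mathcal{S})$.} There are two mismatch events: the pool hits $\mathcal{D}$ but misses $(\mathcal{I}\backslash\mathcal{D})\cup\mathcal{A}$, or the pool misses $\mathcal{I}$ but hits $\mathcal{A}$. These are disjoint, so their probabilities add. With $d=|\mathcal{D}|$ and $a=|\mathcal{A}|$,
\[
h(\mathcal{S})=\left(1-(1-p)^d\right)(1-p)^{K-d+a}+(1-p)^K\left(1-(1-p)^a\right)\geq (a+d)\,p(1-p)^{K+a-1}\geq\frac{r}{e^{\beta+1}K}.
\]
This uses $1-(1-p)^x\geq xp(1-p)^{x-1}$, $p(1-p)^{K+a-1}=(1-p)^{K+a}/K$, and $a\leq\beta K$. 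Your route via $\max\{|\mathcal{D}|,|\mathcal{A}|\}\geq r/2$ gives only $r/(2e^{\beta+1}K)$, which halves the exponent. The case split you flagged as the main obstacle is therefore unnecessary.
\item \textbf{Union bound.} Bounding $\binom{n}{s}\leq n^s$ and then re-inserting binomial coefficients through $(1+n\eta)^n-1$ double counts. Your requirement $n\eta\leq\log(1+\delta/3)/n$ gives $\log\left(n^2/\log(1+\delta/3)\right)$ rather than $\log\left(n/\log(1+\delta/3)\right)$. Keep the coefficients instead: $\sum_{s\geq1}\binom{n}{s}\eta^s=(1+\eta)^n-1\leq e^{n\eta}-1$, which is at most $\delta/3$ whenever $n\eta\leq\log(1+\delta/3)$. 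With $\eta=\exp\left(-\frac{3m(1-2\epsilon)^2}{16e^{\beta+1}(2-\epsilon)K}\right)$, this is exactly the stated condition on $m$.
\end{enumerate}
With these three corrections, your proof coincides with the paper's.
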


\begin{proof}[Proof of Lemma~\ref{lemma:hamming distance}]
  Given a $\mathcal{S}=\left(\mathcal{I} \backslash \mathcal{D}\right)\cup \mathcal{A}$. Let $\left|\mathcal{D}\right|=d$, $\left|\mathcal{A}\right|=a\leq \beta K$, and $r=\left|\mathcal{S}\Delta\mathcal{I}\right|$, where $\mathcal{S}\Delta\mathcal{I}$ denotes the symmetric difference between $\mathcal{S}$ and $\mathcal{I}$. By the definition of $h\left(\mathcal{S}\right)$, we have
  \begin{equation}
    \begin{aligned}
      h\left(\mathcal{S}\right) & =  \left(1-(1-p)^d\right)(1-p)^{K-d+a} 
      + (1-p)^{K}\left(1-(1-p)^a\right) \\
      & \geq (a+d)p(1-p)^{K+a-1} \geq \frac{r}{e^{\beta+1}K}.
    \end{aligned} 
  \end{equation}
  For each pool $i$, define
  \begin{equation}
    W_i=\mathbb{I}\left(\tilde{y}_i\neq \phi_i\left(\mathcal{S}\right)\right) - \mathbb{I}\left(\tilde{y}_i\neq \phi_i\left(\mathcal{I}\right)\right),\ i=1,\ldots,m.
  \end{equation}
  Then, we have
  \begin{equation}
    H\left(\mathcal{S}\right) - H\left(\mathcal{I}\right) = \frac{1}{m}\sum_{i=1}^{m}{W_i}.
  \end{equation}
  If $\phi_i\left(\mathcal{S}\right) = \phi_i \left(\mathcal{I}\right)$, then $W_i=0$. Otherwise, $W_i=1$ when the test outcome is not flipped and $W_i=-1$ when the test outcome is flipped. Thus, for $i=1,\ldots,m$,
  \begin{equation}
      \mathbb{E}\left[W_i\right]=(1-2\epsilon)h(\mathcal{S}), \
      \mathbb{E}\left[W_i^2\right]=h\left(\mathcal{S}\right).
  \end{equation}
  Define $X_i=W_i-\mathbb{E}\left[W_i\right]$ for $i=1,\ldots,m$, which satisfies 
  \begin{equation}
    \left|X_i\right|\leq 2,\ \mathbb{E}\left[X_i^2\right] \leq h\left(\mathcal{S}\right).
  \end{equation}
  Hence,
  \begin{equation}
    \begin{aligned}
      & \mathbb{P}\left(H\left(\mathcal{S}\right) - H\left(\mathcal{I}\right) \leq \frac{{(1-2\epsilon)}h(\mathcal{S})}{2}\right) \\
      & = \mathbb{P}\left(\sum_{i=1}^{m}{X_i} \leq - \frac{m(1-2\epsilon)h\left(\mathcal{S}\right)}{2}\right) \\
      & \leq \exp\left(-\frac{3m(1-2\epsilon)^2 r}{16e^{\beta+1}(2-\epsilon)K}\right).
    \end{aligned}
  \end{equation}

  Let
  \begin{equation}
    \eta=\exp\left(-\frac{3m(1-2\epsilon)^2}{16e^{\beta+1}(2-\epsilon)K}\right).
  \end{equation}
  Applying the union bound over all possible pairs $\left(\mathcal{D},\mathcal{A}\right)$, the total failure probability is at most
  \begin{equation}
    \sum_{\substack{0\leq d \leq K \\ 0\leq a \leq \beta K \\ a+d\geq 1}} \binom{K}{d}\binom{n-K}{a} \eta^{d+a} \leq (1+\eta)^n - 1 \leq e^{n\eta}-1,
  \end{equation}
  which is bounded by $\delta/3$ by the stated condition on $m$. 
\end{proof}

\begin{proof}[Proof of Theorem~\ref{thm:sufficient condition for noisy recovery}]
  By the union bound, the events established in Lemmas~\ref{lemma:candidate set properties noisy} and~\ref{lemma:hamming distance} occur simultaneously with probability at least $1-\delta$, under which $\mathcal{I}$ is the unique minimizer of~(\ref{eq:recovery_problem_noisy_set}).

  Define the objective function of the problem in~(\ref{eq:recovery_problem_noisy_set}) as
  \begin{equation}\label{eq:objective_noisy}
    \Psi\left(\mathcal{S}\right)=\gamma\left|\mathcal{S}\right| + \left|\partial_{\mathcal{E}} \mathcal{S}\right| + \mu H\left(\mathcal{S}\right).
  \end{equation}
  Consider any $\mathcal{S}\subseteq\mathcal{U}_{\epsilon}$ with $\mathcal{S}\neq \mathcal{I}$, which can be expressed as $\mathcal{S}=\left(\mathcal{I} \backslash \mathcal{D}\right)\cup \mathcal{A}$, where $\mathcal{A}\subseteq\mathcal{U}_{\epsilon}\backslash\mathcal{I}$ and $\mathcal{D}\subseteq\mathcal{I}$. Then, we have
  \begin{equation}\label{eq:edge_diff}
    \left|\partial_{\mathcal{E}} \mathcal{S}\right| - \left|\partial_{\mathcal{E}} \mathcal{I}\right| \geq -d_{max}r,\ \left|\mathcal{S}\right| - \left|\mathcal{I}\right| \geq -r.
  \end{equation}
  Since $\left|\mathcal{A}\right|\leq \beta K$ by Lemma~\ref{lemma:candidate set properties noisy}, Lemma~\ref{lemma:hamming distance} yields
  \begin{equation}\label{eq:H_diff}
    H\left(\mathcal{S}\right)- H\left(\mathcal{I}\right)>\frac{{(1-2\epsilon)}h(\mathcal{S})}{2} \geq \frac{(1-2\epsilon)r}{2e^{\beta+1}K}.
  \end{equation}

  Combining~(\ref{eq:edge_diff}) and~(\ref{eq:H_diff}), the term in~(\ref{eq:objective_noisy}) is lower bounded by
  \begin{equation}
    \Psi\left(\mathcal{S}\right) - \Psi\left(\mathcal{I}\right) > \frac{\mu(1-2\epsilon)r}{2e^{\beta+1}K} - (d_{max}+\gamma)r.
  \end{equation}
  When 
  \begin{equation}
    \mu \geq \frac{2e^{\beta+1}K(d_{max}+\gamma)}{1-2\epsilon},
  \end{equation}
  $\Psi\left(\mathcal{S}\right) > \Psi\left(\mathcal{I}\right)$ holds for every $\mathcal{S}\subseteq\mathcal{U}_{\epsilon}$ with $\mathcal{S}\neq \mathcal{I}$. Thus, $\mathcal{I}$ is the unique minimizer of~(\ref{eq:recovery_problem_noisy_set}), completing the proof of Theorem~\ref{thm:sufficient condition for noisy recovery}.
\end{proof}
}

\bibliographystyle{IEEEtran}
\bibliography{IEEEabrv,reference}

\end{document}